  \providecommand\BibTeX{{%
    \normalfont B\kern-0.5em{\scshape i\kern-0.25em b}\kern-0.8em\TeX}}}
\keywords{}
\providecommand{\eat}[1]{}
\newcommand{\introparagraph}[1]{\medskip \noindent {\bf  #1.}}  
\newcommand{\TODO}[1]{\textcolor{red}{#1}}
\newcommand{\Ling}[1]{\textcolor{black}{#1}}
\newcommand{\jmp}[1]{\textcolor{blue}{#1}}
\newcommand{\sd}[1]{\textcolor{brown}{SD: #1}}
\newcommand{\note}[1]{\textcolor{purple}{#1}}
\newcommand{\literals}{\texttt{literals}}
\newcommand{\blare}{\texttt{BLARE}} 
\newcommand{\sys}{\texttt{REI}} 
\newcommand{\wlsqlfull}{\texttt{Database X}}
\newcommand{\wltraffic}{\texttt{US-Accident}}
\newcommand{\wlkusto}{\texttt{System Y}}
\newcommand{\baseline}{\texttt{BLARE-RE2}}
\Crefname{algocf}{Algorithm}{Algorithms}
\newtheorem{observation}{Observation}[section]
\newcommand{\eidx}{\textsf{EnglishIndex}}
\newcommand{\qidx}{\textsf{QueryIndex}}
\newcommand{\biVsTri}{in~\Cref{app:bigram_trigram}}
\newcommand{\correctProof}{in~\Cref{app:correct}}
\newcommand{\tuner}{in~\Cref{app:rei_tuner}}
\begin{document}

\title{Regular Expression Indexing for Log Analysis. Extended Version}

\author{Ling Zhang}
\orcid{0009-0003-2228-3945}
\affiliation{%
    \institution{University of Wisconsin-Madison} 
    \city{Madison}
    \state{WI}
    \country{United States}
}
\email{ling-zhang@cs.wisc.edu}

\author{Shaleen Deep}
\orcid{0000-0003-2342-4060}
\affiliation{%
    \institution{Microsoft Jim Gray Systems Lab}
    \country{United States}
}
\email{shaleen.deep@microsoft.com}

\author{Jignesh M. Patel}
\orcid{0000-0003-3653-2538}
\affiliation{%
    \institution{Carnegie Mellon University}
    \city{Pittsburgh}
    \state{PA}
    \country{United States}
}
\email{jignesh@cmu.edu}

\author{Karthikeyan Sankaralingam}
\orcid{0000-0002-8315-2389}
\affiliation{%
    \institution{University of Wisconsin-Madison}
    \city{Madison}
    \state{WI}
    \country{United States}
}
\email{karu@cs.wisc.edu}

\renewcommand{\shortauthors}{Zhang et al.}

\begin{abstract}
    In this paper, we present the design and architecture of \sys, a novel system for indexing {log data for} regular expression {querie}s. Our main contribution is an $n$-gram-based indexing strategy and an efficient storage mechanism that results in a speedup of up to 14$\times$ compared to state-of-the-art regex processing engines that do not use indexing, using only 2.1\% of extra space. We perform a detailed study that analyzes the space usage of the index and the improvement in workload execution time, uncovering interesting insights. Specifically, we show that even an optimized implementation of strategies such as inverted indexing, which are widely used in text processing libraries, may lead to suboptimal performance {for regex indexing on log analysis tasks}. Overall, the \sys{} approach presented in this paper provides a significant boost when evaluating regular expression queries {on log data}. \sys{} is also modular and can work with existing regular expression packages, making it easy to deploy in a variety of settings.
    {The code of \sys{} is available at \url{https://github.com/mush-zhang/REI-Regular-Expression-Indexing}.}
\end{abstract}

\maketitle


\section{Introduction}

Text processing for data analytics is a fundamental requirement for several applications such as system diagnostics, performance evaluations, security audits, business intelligence. Within text processing, log data processing is a common task that extracts information from semi-structured log entries from heterogeneous sources.
A routine, but computationally expensive,
log processing operation is regular expression (regex, for short) processing. In many scenarios (such as debugging production system errors, root cause analysis, etc.), it is important to get responses from the underlying regex engine in a timely manner. This challenge is further exacerbated by the ever-increasing size of the data that needs to be processed, where brute-force regex searching is unsatisfactory. Recent work~\cite{blare} has introduced new insights on how to improve regex query evaluation for log analysis. \blare{}, focusing on regexes in log analysis tasks, made the conscious choice of not using any indexes to ensure the wide applicability of their techniques. Indeed,~\cite{blare} remarked that since the dataset may be used
in an ad-hoc manner without any need for querying repeatedly, the cost of building indexes is not justified. However, if one hopes to achieve faster log processing, building indexes over the log data becomes necessary.

\begin{figure}[!tp]
\begin{subfigure}{0.75\columnwidth}
  \centering
  \includegraphics[width=0.75\columnwidth]{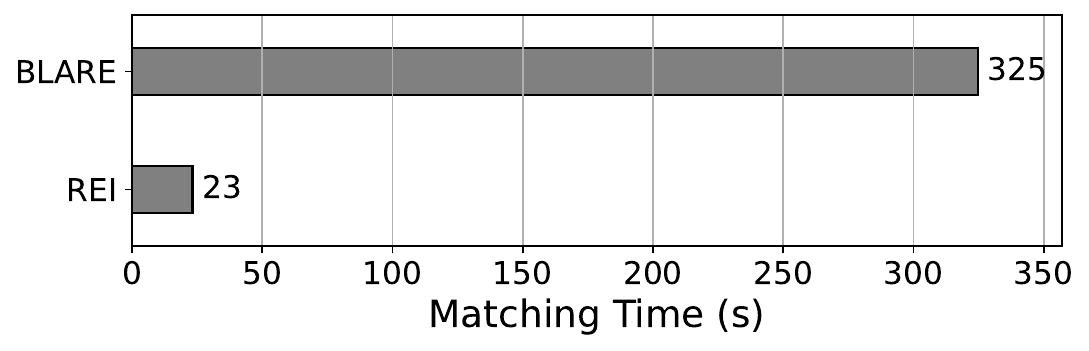}  
  \caption{Workload query matching time comparison.}
  \label{fig:best_time}
\end{subfigure}\\
\vspace{0.5em}
\begin{subfigure}{0.75\columnwidth}
  \centering
  \includegraphics[width=0.75\columnwidth]{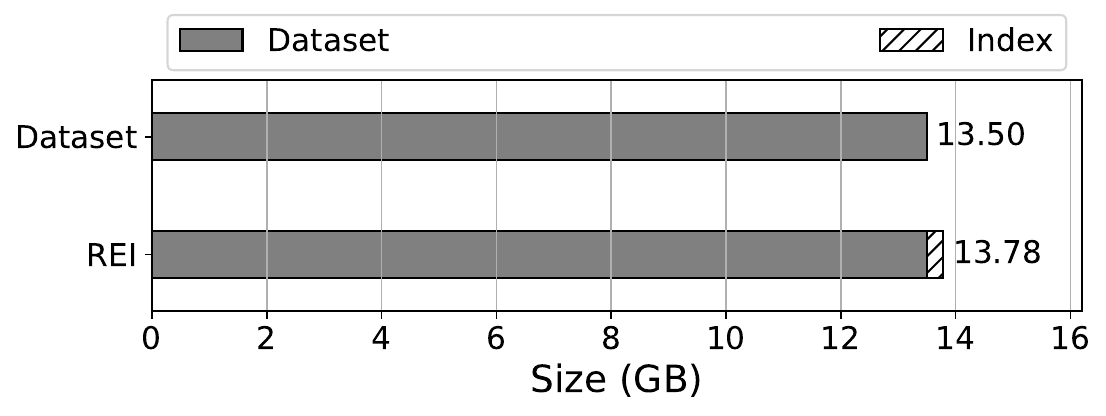}  
  \caption{Size comparison of the dataset and index achieving best performance.}
  \label{fig:best_time_size}
\end{subfigure}
\caption{Compared to the state-of-the-art regex matching framework \blare{}, \sys{} improves the performance by 14$\mathbf{\times}$ on a production workload, with 2.1\% extra space for the index.}
\label{fig:best}
\end{figure}
\eat{Add log workload characteristic here?
might want to condense the detailed description of the existing ones in this section and move it to background?}

\introparagraph{Log processing workload characteristics} Log processing workloads have several unique properties compared to other text processing tasks such as code search and document indexing. First, the smallest granular unit for logs is a log line whose length exhibits a skewed distribution. Most of the log lines are very small in length (up to $100$ characters in our workloads) with very few lines containing many characters (as large as $2000$). Consequently, the number of log lines are of the same order as the size of the logs. In contrast, for document indexing related tasks, each document is {the} smallest granular unit that needs to be retrieved for processing. The document size can range from a few bytes to several hundred MBs in size. Second, the sheer volume of log data even for simple applications can be huge. As an example, GitHub indexes~\cite{github} $15.5$B documents across all its code repositories with total data size of $115$TB. In contrast, for our log processing datasets, $15$B log lines is generated by production applications in only about $5$ hours. 
{Third, the log dataset is usually generated by log-generating formatted code from different components of the system.}
Finally, unlike for search applications where retrieving top-$k$ matches is the common setting (which allows for optimizations such as early termination of processing), the query workloads for log data require fetching all matches correctly.

\eat{\note{Log processing workloads has several characteristic, (1) its regexes are has simple regex components, and long literal components compared to regexes in general (more likely to be filtered by index than normal regex? It sits in between conventional regex query and text search) (2) log length distribution is extremely right-skewed; in one of our dataset, 50\% of the logs has around 100 or under characters, but the longest log line has 2,000 characters (cannot use signature files as it requires similar sized blocks). How do we incorporate these into log analysis regex index requirement? By comparing with existing method, I had a rough placeholder characteristic guideline (in discussion part of section 3.2): then the size of each data block is small, alphabet is large, and queries has the \textit{needle-in-a-haystack} characteristic, bit-vector index with well-designed encoding method can out-perform inverted indexes in log analysis task.}}
There are three key parameters that need to be considered when constructing an index for regex log processing: the index construction time, the space used by the index, and the performance gain obtained when running the workload at hand.
In particular, the choice of index data structure and entries for indexing has a substantial impact on the performance. Prior works mainly use inverted index suffix trees, and signature files as index data structures and $n$-grams of characters {or words} as index entries~\cite{Williams_2002, Navarro_2018, Kim_2010, SHIN_2018, Faloutsos_1984, Baeza_Yates_1996, Cao_2005}. Novel variants of suffix trees are used for processing biological data, utilizing the small alphabet size (such as the limited character set for expressing DNA). Yet suffix tree indexes can easily occupy space more than 10$\times$ compared to the original dataset itself~\cite{Ruano_2021}. Signature files~\cite{Faloutsos_1984} use list or hierarchical lists of special signatures that encodes all words in each document. This method is less popular as inverted index presents a better running time (but at the cost of using more space)\footnote{\cite{Faloutsos_1984} noted that signature files method requires space overhead of about $5-30\%$ of initial file sizes but this overhead is still too large for our use-cases. In fact, $5\%$ is the threshold on the space usage based on our discussions with product teams at X.}. Postgres uses inverted index of all trigrams for speeding up both exact and approximate text search~\cite{postgres_trg}. \eat{These indexes are used in the processing of string operations such as the \textsf{LIKE} operator in SQL.} Inverted indexes are also a popular choice that is used in frameworks such as Lucene~\cite{lucene} for text search. However, for both Postgres and Lucene, there is no option to control the space usage. As we will demonstrate later, inverted indexes for log processing can lead to significant space overhead (\cite{Faloutsos_1984} noted a $50-300\%$ space overhead). Index size can be significantly reduced by reducing the number of entries indexed. Cho and Rajagopalan~\cite{FREE} proposed an algorithm called \textsf{FREE} that builds an inverted index over carefully chosen multigrams. This work was subsequently improved upon by Hore et al.~\cite{BEST}, who also utilized multigrams but developed a more effective multigram selection strategy that formed the core of \textsf{BEST}. 

\introparagraph{Why existing solutions do not work for log indexing} \looseness=-1 Each of the above works have limitations. Both \textsf{FREE} and \textsf{BEST} require both the query workload and the data as a part of the input for selecting the multigrams. While the improved strategy of \textsf{BEST} helped the query performance, finding the near-optimal subset of multigram to index is significantly more expensive.  Unlike document search based applications where building inverted and tree-based indexes requires less than $1\%$ of the total input size\footnote{Using GitHub as an example, a complete binary tree with $\sim15B$ leaf nodes and each node storing $10$ bytes of information is $0.26\%$ of the $115$TB input size.} since the number of documents is much smaller than the total size of all the documents, for logs, the sheer number of log lines make it impractical to build tree-based indexes. \eat{Lastly, the reader can see that each of these works cannot be directly compared with each other. Therefore, there is an unmet need to systematically analyze the trade-off between the cost of index construction, the benefit of the index for query evaluation, and the impact of the index storage mechanism. }

\eat{
\jmp{Would be great to get a ``showcase'' picture in the intro at the top of column 2 on page 1. It could just be BLARE vs \sys{} for one dataset and note that compared to the state-of-the-art method BLARE, \sys{} improves the performance by XXX on a sample dataset YYY.}
}

\smallskip
\noindent {\it \textbf{Our Contribution.}} In this paper, we revisit the problem of regular expression indexing and systematically analyze how classical approaches such as $n$-grams usage and inverted indexing behave when used with state-of-the-art regex evaluation engines on modern hardware. Our contributions are as follows.

\introparagraph{1. \emph{A new framework for regex indexing}} We present a lightweight indexing framework, \sys{}, that is configurable to balance the cost of building indexes with query performance improvement. In contrast with existing approaches, our technique differs in two aspects. First, we use a filtering-based bit-vector index for a carefully chosen subset of strings of length $n$ (called $n$-grams). Second, rather than storing the index in a separate data structure, we store the bit-vector index along with each log line of our input. {The framework leverages the \emph{negative index property}, where the absence of required $n$-grams in the index can quickly eliminate log lines from consideration without the need for expensive regex evaluation.} This approach has several benefits, which we outline in~\Cref{subsec:index_construction}.

\introparagraph{2. \emph{Thorough analysis of gram selection strategies}} We compare and contrast our proposal with prior works in the area of regex processing using indexes. We empirically demonstrate that existing approaches fall short and lead to suboptimal performance compared to our approach. Regex queries in log analysis workloads tend to have relatively low selectivity as they mostly contain queries that are finding \emph{needle-in-a-haystack}~\cite{Whitaker_2004, Weigert_2014, Ellis_2015, Yu_2019}. We demonstrate that for such workloads, it is sufficient to use the query workload itself to choose the right set of $n$-grams, rather than finding the optimal set of $n$-grams.

\introparagraph{3. \emph{Adaption to Unknown Queries}}
{We propose a strategy for constructing an index when the query set is unknown. Existing key selection methods struggle with unknown queries in log processing workloads. Our system adapts by using similar index key selection guidelines, leveraging log processing characteristics. This strategy significantly reduces runtime and is comparable to an index built with a known query set.}

\introparagraph{4. \emph{Experimental Evaluation}} \looseness=-1 To demonstrate \sys{}'s efficiency, scalability, configurability, and robustness, we tested it on three real-world workloads. Our analysis shows \sys{} consistently outperforms popular schemes like inverted indexes in both construction overhead and performance. It improves the matching time of \blare{} by up to 14$\times$ with low space overhead (\Cref{fig:best}). \eat{We examined how parameters like the number of bigrams indexed and index entry granularity affect performance and overhead, allowing systematic optimization.} Additionally, \sys{} shows significant gains even without prior knowledge of regex queries by utilizing the log processing workload characteristics. 
\eat{
\begin{itemize}
    \item We present a lightweight indexing framework, \sys{}, that is configurable to balance the cost of building indexes with query performance improvement. In contrast with existing approaches, our technique differs in two aspects. First, we use a filtering-based bit-vector index for a carefully chosen subset of strings of length $n$ (called $n$-grams). Second, rather than storing the index in a separate data structure, we store the bit-vector index along with each log line of our input. This approach has several benefits (outlined in~\Cref{subsec:index_construction}) that lead to simplicity of use and ease of extension.
    \item We compare and contrast our proposal with prior works in the area of regex processing using indexes. We empirically demonstrate that existing approaches fall short and lead to suboptimal performance compared to our approach. Regex queries in log analysis workloads tend to have relatively low selectivity as they mostly contain queries that are finding \emph{needle-in-a-haystack}~\cite{Whitaker_2004, Weigert_2014, Ellis_2015, Yu_2019}. We demonstrate that for such workloads, it is sufficient to use the query workload itself to choose the right set of $n$-grams, rather than finding the optimal set of $n$-grams.
    \item To demonstrate that \sys{} is highly efficient, scalable, configurable, and robust, we implemented and tested it on three real-world workloads and systematically analyzed the impact of different parameters on the performance and overhead of our indexing framework. The results of our experiments show that \sys{} is effective in that it consistently outperforms popular schemes such as inverted indexes in terms of both construction overhead and performance improvement. It also outperforms the state-of-the-art framework \blare{} by a factor as large as $14\times$ with low space overhead (\Cref{fig:best}). We examined at how configurable parameters (the number of bigrams indexed, the granularity of index entries) affect the overhead and performance, and we demonstrate the ability to systematically tweak those knobs to get the best outcome. Moreover, our framework shows significant gains in performance even in the absence of prior knowledge about the content distribution of the regex workload. 
\end{itemize}
}

Overall, our experiments and analysis demonstrate that \sys{} is a simple yet effective regex indexing framework, and showcases its potential to significantly enhance regex query performance in real-world log analysis tasks.

\eat{The remainder of this paper is structured as follows: Section 2 provides an overview of the necessary background, covering regular expression, $n$-grams, and previous work in $n$-gram-based indexing. In Section 3, we present the design of the \sys{} indexing framework and provide insight into the techniques used and their justifications. A comprehensive empirical evaluation of both production and academic real-world workloads is presented in Section 4. In Section 5, we discuss the related works, and we conclude in Section 6.}

\section{Background} \label{sec:background}
\eat{In this section, we explain the foundational principles of regular expressions, focusing on essential concepts. }
\eat{In this section, we explain the foundational principles of regular expressions queries, $n$-grams that are commonly used in 
indexing for regex evaluation, and existing data structures that can be used for indexing $n$-grams.}

\begin{table}[t]
  \footnotesize
  \centering
  \captionof{table}{{Summary of symbols and notation used in this paper.}}
    \begin{tabular}{c|l |c|l}
      \hline
      \textbf{Symbol} & \textbf{Meaning}  &  \textbf{Symbol} & \textbf{Meaning} \\
      \hline
      $\Sigma$  & Finite alphabet & $W$         & Workload consisting of $Q, L$ \\
      $R$       & Regular expression & $q$         & Individual query \\
      $L$       & Log dataset & $n$         & Length of $n$-grams \\
      $Q$       & Set of regex queries & $k$         & Number of $n$-grams indexed \\
      $I$       & Index built & $g$         & Individual $n$-gram \\
      $\ell$    & Individual log line & $\literals$ & Literal components\\
      \hline
    \end{tabular}
  \label{tab:symbols}
\end{table}

\introparagraph{Regular Expression} 
We define regular expressions over a finite alphabet $\Sigma$ using syntax such as the empty language ($\emptyset$), empty string ($\varepsilon$), literals (character $c \in \Sigma$), union of two languages ($\mid$), concatenation of two languages ($\cdot$), and the Kleene Star (zero or more concatenations of the language, denoted by $*$). The syntax allows us to represent a wide variety of string patterns, with the language denoted by a regular expression defined through a function $L : R \rightarrow 2^{\Sigma*}$. The syntax of regex is as follows:
$$ R : \emptyset \mid \varepsilon \mid c \mid (R \mid R) \mid (R \cdot R) \mid (R*) $$
\eat{Almost all state-of-the-art regex libraries compile regexes to automata, either non-deterministic automata (NFAs) or deterministic automata (DFAs), and performs matching on the automata.}

\introparagraph{Literal Component}
It is also important to consider literal components, which is a major source for indexing, in regular expressions. Literal components are sequences of characters from the alphabet that match the input string exactly as they appear, without any interpretation as operators or special characters. Formally, we denote them as $\literals \leftarrow \Sigma* $. They are distinct from the regular expression components, which include operators like $*$, $+$, $?$, $\mid$, and other defined patterns. 
\eat{For instance, \texttt{abc} is a literal component and \texttt{\textbackslash{}d\{3\}} is a regular expression component in the regular expression \texttt{abc\textbackslash{}d\{3\}}". While the regular expression component \texttt{\textbackslash{}d\{3\}}" matches any sequence of three digits, the literal component \texttt{abc} will only match the string "abc". 
}
{Literal components in regexes are especially relevant in log processing workload, as the regexes in such workload often contains literals that are more selective than their regex components.}
The concept of \literals\ has been used in several prior works~\cite{blare, wang2019hyperscan}, in which the authors utilize the fact that exact character matching is much cheaper than state transitions in an automaton. Hyperscan~\cite{wang2019hyperscan} evaluates components in a sequential fashion together with other optimizations. {For each regex,} \blare{}~\cite{blare} {selects a {specific} proportion of the input data and} uses machine learning techniques {on the picked subset} to determine the best regex-literal splitting strategy for each query{, and then evaluate the rest of the dataset with the chosen strategy.} It usually evaluates all literal components before regex components.

\eat{
In this section, we present the notation and the class of regex queries
that are considered in the paper. Given a finite alphabet $\Sigma$, the syntax
of regex can be defined by:

$$ R : \emptyset \mid \varepsilon \mid c \mid (R \mid R) \mid (R \cdot R) \mid (R*) $$

where $\emptyset$ denotes the empty language, $\varepsilon$ is the language containing the empty string $\{ \epsilon\}$, $c \in \Sigma$ denotes the language consisting of the single character $\{c\}$, $R \mid R$ is the union of two languages, $R \cdot R$ is the concatenation of two languages, and $R*$ denotes zero or more concatenations of the language $R$. We denote by $\mathcal{R}$ the set of all possible regular expressions. The {language} of regular expression is defined by $\mathcal{L} : \mathcal{R} \rightarrow 2^{\Sigma*}$:

\begin{align*}
    \mathcal{L}(\emptyset) &= \emptyset  \\   \mathcal{L}(\varepsilon) &= \epsilon \\   
\mathcal{L}(R_1 \mid R_2) &= \mathcal{L}(R_1) \cup \mathcal{L}(R_2)\\
\mathcal{L}(R_1 \cdot R_2) &= \mathcal{L}(R_1) \cdot  \mathcal{L}(R_2) \\
\mathcal{L}(c) &= \{c\} \\ \mathcal{L}(R*) &= \mathcal{L}(R)*
\end{align*}

The concatenation operator (denoted by $\cdot$) over languages is defined as $L_1 \cdot L_2 = \{ \omega_1 \omega_2 \mid \omega_1 \in L_1, \omega_2 \in L_2\}$. The closure operator (denoted by $*$) is defined as $L* = \bigcup_{i \geq 0} L^i$ where $L^0 = \{\epsilon\}$ and $L^i = L \cdot L^{i-1}$ for any $i > 0$. We define $\literals \leftarrow \Sigma* $, i.e., any sequence of zero or more that can be formed using the letters of the alphabet. The concept of \literals\ has been used in several prior works~\cite{blare, wang2019hyperscan}.
}

\introparagraph{$N$-Grams}
An $n$-gram (also referred to as a  $q$-gram in the literature) is a subsequence of length $n$. An $n$-gram generation from a string $s$ of length $c$ refers to constructing the set of subsequences of length $n$ for each starting position $i \in \{0, \dots, c - n\}$. Given a string of length $c$, there are $c - n + 1$ $n$-grams that can be generated from $s$. For example, given a string \emph{vmName} and $n=3$, there are four $n$-grams that can be constructed: \emph{vmN}, \emph{mNa}, \emph{Nam}, and \emph{ame}. If $c \leq n$, then the $n$-gram generation of $s$ is $s$ itself. The most commonly used $n$-grams in practice are bigrams ($n=2$) and trigrams ($n=3$). Mixed $n$-grams have also been used in prior works. \eat{Existing indexes for regex usually use character-based $n$-grams; in area of text-search, word-based $n$-grams are also popular.}

\introparagraph{$N$-Gram Selection} 
There have been several optimizations proposed in the literature for improving the space efficiency of an inverted index. For example, \textsf{FREE}~\cite{FREE} shows that there is an inverse trade-off between the increasing value of $n$ for generating the $n$-grams and the size of the inverted index. 
{\textsf{BEST}~\cite{BEST} reduces the gram selection problem to graph covering problem, and select the multigram set with approximately (with provable guarantees) highest \textit{benefit} {(which will be discussed in \Cref{subsec:gram_select})}. {\textsf{LPMS}~\cite{FAST} combines strategies of \textsf{FREE} and \textsf{BEST} and uses linear programming to approximate the optimal multigram set.} These methods {require} full knowledge of the dataset and the query before selection and the calculation is very time consuming.

GitHub Code Search~\cite{github} reduces the size by using multigrams starting from bigrams for its generality. It limits the number of multigrams with $n>2$ by assigning weights to every bigram and creates indices for multigrams formed by adjoining bigrams. For these multigrams, the weights of the internal bigrams are strictly lower than those of the bigrams at both ends.

\introparagraph{Index Data Structure}
An $n$-gram index is a data structure used to determine if a subsequence \eat{identical to a given string }$s$ exists in an input $\ell$. Google Code Search~\cite{cox2012regular} introduced a common method for constructing an $n$-gram index for document search, creating an inverted index with trigrams as keys and lists of document identifiers as values. This allows for finding documents containing all trigrams generated from $s$ and supports disjunctions, such as finding documents containing either \emph{abc} or \emph{def}. Baeza-Yates and Gonnet~\cite{Baeza_Yates_1996} introduced suffix tree indexes for regex evaluation, which store all possible suffix strings in a dataset. The large space requirement makes them only practical in protein datasets with limited alphabets. Bit-vector or bitmap indexing is popular for its fast bitwise operations and performance benefits for highly selective queries~\cite{Chan_1998, Chan_1999}. This type of index associates one bit-vector with each record, suitable for general selection queries across various data types. For text data, methods like superimposed signature files~\cite{Faloutsos_1984, Sacks_Davis_1987, Wong_1985, Goodwin_2017} and bloom filters~\cite{Grabowski_2016} encode documents into bit-vectors, providing effective prefiltering but with significant space overhead.

\section{Framework and Techniques}

In this section, we present the design and techniques used for our indexing framework. The framework consists of two steps. The first step is the extraction of $k$ bigrams from the query set and ordering them based on their frequencies. The second step involves the construction of an $n$-gram index. Using the bigrams extracted, we construct a bit-vector filter consisting of $k$ bits for each line in the input log. The $i^\text{th}$ bit in the filter for a log line $\ell$ indicates whether the $i^\text{th}$ bigram is present in $\ell$ or not. In the querying phase, given a regular expression, we find the string literals in the regular expression, generate $n$-grams for each string literal, and use the $n$-gram index to check if all $n$-grams for one literal exist in the index. We will discuss in details in the following subsections.

\subsection{N-Gram Selection}\label{subsec:gram_select}
\looseness=-1 The choice of $n$-grams for the index is crucial to the effectiveness of the indexing framework when matching regular expressions over large log datasets. It is influenced by four components: queries $Q$, dataset $L$, the choice(s) of $n$, and the number of chosen $n$-grams ($k$). Prior works have made mixed choices of $n$ or have chosen a fixed sized $n$. We made the choice of only using bigrams (i.e. $n=2$). 
{Bigrams strike an optimal balance between being general enough to appear across diverse queries while maintaining sufficient filtering power. As $n$ increases, $n$-grams become more specific to particular queries, reducing their ability to filter across a broad range of regex patterns. We demonstrate it experimentally in \biVsTri{} with a detailed comparison between bigrams and trigrams. Our key findings are that as $n$ goes beyond two, the fraction of bigrams that are a subset of $n$-grams also increases, and thus, with high probability, the bigrams generated from the larger $n$-gram are as selective as the bigger $n$-gram itself.}

\Ling{Recent work compared across the existing $n$-gram selection methods (\textsf{FREE}, \textsf{BEST}, and \textsf{LPMS}) both theoretically and experimentally~\cite{ngram_select_comp}. It runs the three methods on different types of workloads and summarize the insights into a general guideline to select among the three selection methods according to the workload characteristics. For our workloads where unseen queries may be expected, query set is large with long query literals, \textsf{FREE} performs the best. On the other hand, \textsf{BEST} can generate the theoretically optimal $n$-gram set for indexing. }

\Ling{Therefore}, we compare against the strategy of choosing $n$-gram as introduced by the \textsf{BEST} and \textsf{FREE} algorithms. 
\Ling{\textsf{FREE} rus with space overhead $O(max_n \cdot |L|)$ and the compute overhead as $O(max_n \cdot  |L|)$ where $max_n$ is the configurable maximum length of $n$-grams.
\textsf{BEST} finds $k$ $n$-grams that theoretically maximizes the effectiveness in time $O(k \cdot |Q| \cdot |L|)$ and uses $O(|Q| \cdot |L|)$ space.}

{The \textsf{BEST}~\cite{BEST} method introduces the concept of \emph{benefit} and \emph{cover} to measure the effectiveness of $n$-gram selection. The \emph{cover} of an $n$-gram $g$ on workload $W = (L, Q)$, $cover(g) = \{(q,l)\in Q\times L\mid g\in q\land g\notin l\}$, is the set of log lines that can be filtered out by $g$ among all queries.
The \emph{benefit} can be expressed as  $\mathnormal{bene}_{W}(g) = |cover(g)|$. 
\textsf{FREE} selects grams by choosing the $n$-grams with the highest benefit, in order of increasing $n$ (i.e., smaller $n$ is preferred).
The \emph{benefit} of index $I$ on a workload denoted as $\mathnormal{bene}(I) = |\bigcup_{g \in I} cover(g)|$, is the number of log lines that can be filtered out by the index $I$. Therefore, the incremental benefit given an index is
$\mathnormal{bene}(g|I)= |cover(I\cup \{g\}) - cover(I)|$.

This metric captures both the filtering power of the $n$-gram and its relevance to the query workload, considering other $n$-grams in the index. \textsf{BEST} reduces the gram selection to a graph covering problem and selects the multigram set with provable guarantees on the benefit. }

We propose an $n$-gram selection strategy that runs in time $O(|Q|)$ \Ling{with negligible space overhead }{by selecting based on $n$-gram frequency in the query workloads}. \Ling{This is significantly smaller than both \textsf{FREE} and \textsf{BEST}} {in that it focuses on query-relevant $n$-grams rather than dataset-based selectivity measures. The intuition is that frequently occurring $n$-grams in queries are more likely to be useful for filtering across diverse log analysis tasks.}
{Our $n$-gram selection strategy works as follows: For each query $q$ in the workload $Q$, we first preprocess the query to extract literal components, then scan these literals from left to right to identify all bigrams present in the query. For each bigram discovered, we update a global counter dictionary that tracks the number of queries containing that specific bigram. After processing all queries, we rank bigrams by their frequency (number of queries containing them) and select the top-$k$ most frequent bigrams for indexing.
}

In order to show the cost and effectiveness of $n$-gram selection strategies, we run a micro-benchmark that builds and uses indexes selected with 3 different strategies {with low index overhead}. \Ling{We take the simplified concept of \textit{benefit} from \textsf{BEST}, defined in~\cite{ngram_select_comp}. Let $g$ be an individual bigram. The three are: }
(1) the most frequently occurring bigrams (Freq), (2) bigrams with the highest $benefit(g,\emptyset)$ (Bene), and (3) selected bigrams $g_i$\Ling{based on $benefit(g_i|\{g_{i-1}\})$, where $g_j$ ranks according to $benefit(g_j,\emptyset)$} (Incr\_Bene). We run the three methods on a subset of a real-world workload containing one million log lines and 132 queries.

\begin{figure}[!tp]
\begin{subfigure}{0.95\columnwidth}
  \centering
  \includegraphics[width=0.6\columnwidth]{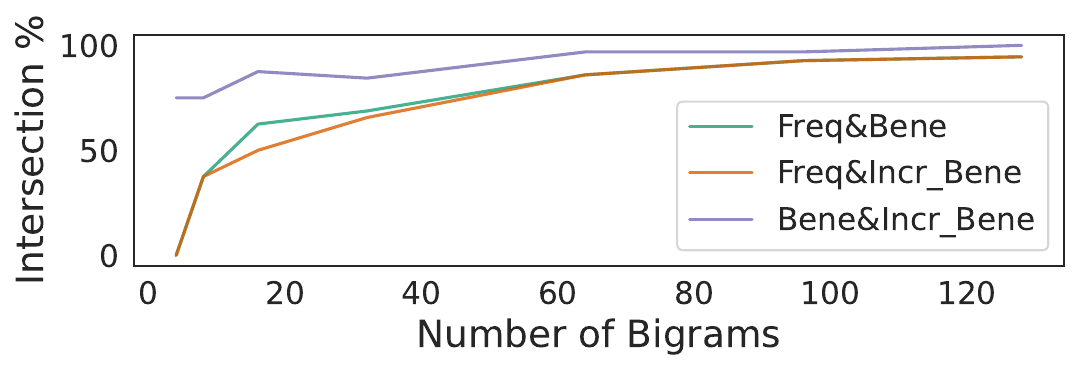}  
  \caption{Bigram intersection  as a percentage from strategy pairs.}
  \label{fig:microbench_intersect}
\end{subfigure}
\\
\hspace{1.1em}
\begin{subfigure}{0.95\columnwidth}
  \centering
  \includegraphics[width=0.6\columnwidth]{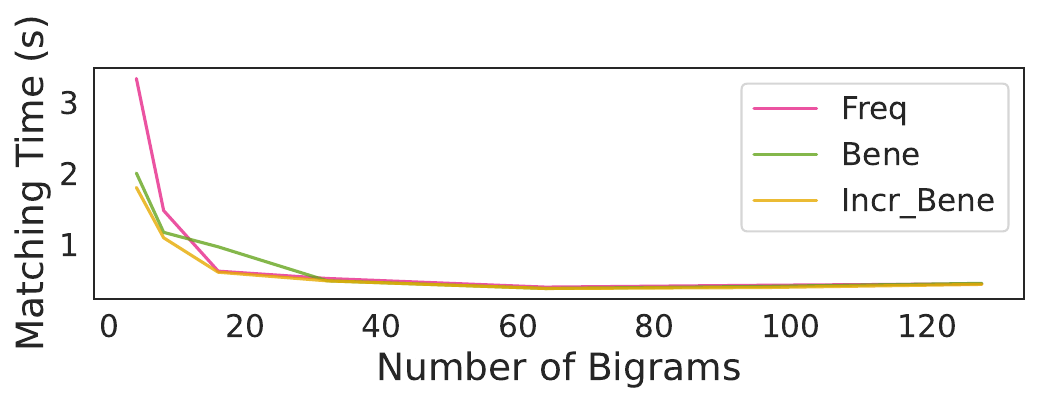}  
  \caption{Regex Matching Time.}
  \label{fig:microbench_time}
\end{subfigure}
\caption{Varying the number of bigrams, compare the set of bigrams selected by the three methods and the matching time applying their resulting indices.}
  \vspace{-1.5em}
\end{figure}
{To develop some intuition about how different the bigrams chosen by each of the strategies are,} we look at their overlap in their selected bigrams when selecting the top 4, 8, 16, 32, 64, 96, and 128 bigrams and present the \emph{intersection percentage} in~\Cref{fig:microbench_intersect}. 
{The \emph{intersection percentage} refers to the percentage of common bigrams selected by different strategies. We use it to measure the overlap between the output sets of different selection methods. With the selection by \textit{Incr\_Bene} closer to optimal selection, a higher intersection percentage with \textit{Incr\_Bene} shows a higher performance of the selection method.}
There is a notable similarity in the bigrams selected by Bene and Incr\_Bene. The intersection percentage is high at 75\% with only 4 and 8 bigrams and steadily increases to 96.9\% for 96 bigrams and 100\% for 128 bigrams. {We observed that conditional on the top-100 selected bigrams, no other remaining bigrams can generate a positive incremental benefit.} As the number of chosen bigrams increases from 4 to 64, the intersection percentage of Freq with the other two methods gradually increases to 85.9\%. Their rates of intersection plateaus as the number of bigrams selected continues to increase. 

In addition, we evaluate the matching performance using the indices derived from the three methods and present the results in~\Cref{fig:microbench_time}. 
The matching time is quite different when the number of bigrams selected is small, where using the index built with bigrams selected by Freq gives the slowest matching time. The times gradually converge for the indices built with the three methods as the number of bigrams increases. The running time reaches a minimum when indexing with 64 bigrams, where they are 0.395, 0.376, and 0.376 for Freq, Bene, and Incr\_Bene respectively.

We also evaluate the time needed to choose bigrams for each approach.  
The ranking calculations for frequency and benefit are finished in 0.0014 seconds and 21.5 seconds, respectively, for selecting 64 bigrams. In sharp contrast, the Incr\_Bene technique requires hours (31,638.2 seconds) when only taking into account the bigrams with top-200 individual benefit. The computational time increases even more when the Incr\_Bene approach is applied to the entire dataset, taking well over $24$ hours to complete\footnote{We terminate the experiments at $24$ hours.}. When choosing bigram selection strategies for indexing, there are trade-offs between filtering power of selection result and computational efficiency.

Based on this microbenchmark, we find that selecting bigrams by frequency is the most effective strategy for log analysis workloads, offering significantly lower overhead than method \textsf{FREE}, while yielding results comparable to the most precise method \textsf{BEST}.

\subsubsection{Considerations for $N$-Gram Selection} \label{subsubsec:num_ngram}

In this subsection, we outline some of the considerations for $n$-gram selection.

\introparagraph{Frequency Threshold for $N$-Grams Selection}
Prior work has proposed heuristics such as removing $n$-grams that appear in more than 10\% of queries, as these are typically pruned by algorithms when identifying the optimal covering set of bigrams. We chose not to adopt this strategy due to differences in index use cases. Our selected $n$-grams are intended not only to enhance matching performance for the current set of queries and dataset but also to improve performance for potential future queries on new inputs that may be added to the log.

\introparagraph{Number of $N$-Grams Indexed and Choosing $k$} 
The choice of $k$ is non-trivial. Assuming we know the ideal $k$, our microbenchmark shows that selecting the $k$ most frequent $n$-grams from the workload is sufficient. This works because regex queries typically exhibit \emph{needle-in-a-haystack} behavior with very low selectivity, largely driven by string literals~\cite{Whitaker_2004, Weigert_2014, Ellis_2015, Yu_2019}. Thus, focusing on literals when picking $n$-grams is effective. Increasing $k$ improves coverage but offers diminishing returns beyond a point.

The size of a bit-vector index is determined by the number of bigrams $k$ we choose to index. This fixed size ensures predictability in storage requirements, making memory allocation and management more straightforward. We can set the value of $k$ {to multiples of word size} such that no space is wasted due to alignment or padding. {When the size of the query subset in the workload is small and expected to not be a perfect representation of all future queries, consider {increasing} $k$. This will allow indexing some bigrams that {appear} less frequently in the query subset at hand, but may appear in future queries.}

{
In practice, users may want to specify size constraints for the index. We propose to automatically suggest optimal values for $k$ and index granularity under given size constraints by using heuristics involving bigram selectivity analysis. 
We provide a prototype \sys-Tuner with heuristic estimation of filtering benefit per bit and present initial experimental results \tuner{}. 
}

\subsection{Index Construction}\label{subsec:index_construction}
\begin{algorithm}[!t]
        \small
	\SetCommentSty{textsf}
	\DontPrintSemicolon 
	\SetKwInOut{Input}{Input}
	\SetKwInOut{Output}{Output}
	\SetKwRepeat{Do}{do}{while}
	\SetKwFunction{uniqueGram}{\textsc{Get-Unique-Gram}}
	\SetKwFunction{selectBigram}{\textsc{Select-Bigram}}
	\SetKwFunction{getLiteral}{\textsc{Get-Literal-In-Regex}}
	\SetKwFunction{setBits}{\textsc{Set-BitVector}}
	\SetKwFunction{append}{\textsf{append}}
	\SetKwFunction{contains}{\textsf{contains}}
	\SetKwFunction{size}{\textsf{size}}
	\SetKwFunction{len}{\textsf{len}}	
        \SetKwFunction{add}{\textsf{add}}	
        \SetKwFunction{put}{\textsf{put}}	
        \SetKwFunction{get}{\textsf{get}}	
        \SetKwData{sj}{$\mathbf{S}$}
	\SetKwProg{myalg}{procedure}{}{}
	\SetKwProg{myproc}{\textsc{procedure}}{}{}
	\SetKwData{return}{\textbf{return}}
	\Input{Workload $W$ with logs $L$ and regexes $R$, $k$}
        $I\gets []$\;
        $G\gets \selectBigram{R, k}$\;
	\ForEach{$\ell \in L$ \tcc*{{processing each log line}}}{
            \sj $\leftarrow$ \uniqueGram{$\ell$, $2$}\;
            $b \gets$ \setBits{\sj, G}\;
            $I.$\append{$b$}
	}
        \return $I$\;
        \myalg{\uniqueGram{s, n}}{
            $S \gets $ hash map \label{line:S} \;
            \For{$i \in \{ 0, \dots, \size{s} - n \}$}{
                $g \leftarrow s[i:i+n- 1 ]$\;
                \If{$\neg S.\contains(g)$ and $g$ is a string literal}{
                    $S.\put(g, 0)$ \;
                }
                $S.\put(g, S.\get(g) + 1)$
            }
            \return $S$
        }
        \myalg{\selectBigram{R, k}}{
            $T \leftarrow $ hash map \label{line:T} \;
            \ForEach{$r \in R$}{
                $C \leftarrow$ \getLiteral{$r$} \;
                $T_i \leftarrow $ hash set\;
                \ForEach{$c \in C$}{
                    $T_i.$\add(\uniqueGram{$c$, $2$}.keys())\;
                }
                \ForEach{$g\in T_i$}{
                    \If{$\neg T.\contains(g)$}{
                        $T.\put(g, 0)$ \;
                    }
                    $T.\get(g) \leftarrow T.\get(g) + 1$\;
                }
            }
            Sort $T.$keys() by value in descending order into vector $V$\;
            $G \leftarrow $ hash map \;
            \For{$i \in \{0, \dots, k-1\}$}{
                $G.\put(V[i], i)$\;
            }
            \return $G$
        }
        \myalg{\setBits{S, $G$}}{
            $b \leftarrow $bit vector of length $\size{G}$\;
            \tcc{Compare sizes of the two hash-maps and decide the probing map}
            \If{\size{$S$} $<$ \size{$G$}}{
                \ForEach{key $\in S$}{
                    \If{$G.$\contains{key}}{
                        $b[G[key]] \leftarrow 1$
                    }
                }
            }
            \Else{
                \ForEach{key, val $\in G$}{
                    \If{$S.$\contains{key}}{
                        $b[val] \leftarrow 1$
                    }
                }
            }
            \return $b$
        }
	\caption{{\sc BitVector-Index-Construction}}
	\label{algo:index_construct}
\end{algorithm}

Once the bigrams have been selected and ranked according to their frequency, we proceed to scan the dataset. For each log line in the dataset, we get all its bigrams and identify the presence of each of the selected $k$ bigrams. This information is encoded as a $k$-bit vector where the $i^\text{th}$ bit indicates if the $i^\text{th}$ bigram is present in the log line or not. 

\Cref{algo:index_construct} shows the execution steps for building the index. Let $\mathbf{b} = (b_1, b_2, \cdots, b_k)$ be the ordered sequence of the selected bigrams. We use $\mathbf{G}$ as one-to-one mapping from each bigram $b_i$ to its offset $i$ in the bit-vector. First, for each log line $\ell$, we use procedure \textsc{Get-Unique-Bigrams} to do the bigram generations for $\ell$. Then, given the bigrams, we create the bit-vector in procedure \textsc{Set-BitVector}. The resulting bit-vector for the processed log line is added to a list $I$ that stores the bit-vectors, which is the final index. Observe that both the time requirement and the space requirement of the index are $O(k \cdot |L|)$.

\introparagraph{Discussion} Beyond bit-vector based index, signature files based index, and inverted index are alternate methods techniques for building index. In fact,~\cite{Goodwin_2017} showed that signature files {inspired bit-vector based} methods can outperform inverted indexes contrary to conventional belief. In \Cref{subsec:other_idx}, we experimentally demonstrate that for log processing, {carefully crafted} bit-vector based methods outperform signature files and inverted indexes.

\subsubsection{Index Granularity} \label{sec: index_granularity}
{Index granularity defines how many log lines each index entry covers. Our framework creates one bit-vector for every $m$ consecutive lines, where each bit marks the presence of a specific bigram in that group (note that \Cref{algo:index_construct} considers $m=1$, i.e., each log line has a bitvector). For example, with $m=4$, a single bit-vector represents four lines; if any line contains a bigram, its bit is set to 1. During queries, if all required bigrams appear in the bit-vector, the group is scanned line by line using regex.

The choice of granularity involves important trade-offs: \begin{itemize}[wide] \item \textbf{Finer granularity} (smaller $m$): Higher filtering precision but larger index size and higher construction cost. \item \textbf{Coarser granularity} (larger $m$): Lower storage and computation cost but more false positives. \end{itemize}

The optimal setting depends on log characteristics, query patterns, and resource constraints. \sys{} supports configurable granularity to balance precision and efficiency.
}

\subsection{Regular Expression Querying} \label{subsec:framework:regex_query}
\begin{algorithm}[!t]
        \small
	\SetCommentSty{textsf}
	\DontPrintSemicolon 
	\SetKwInOut{Input}{Input}
	\SetKwInOut{Output}{Output}
	\SetKwRepeat{Do}{do}{while}
	\SetKwFunction{uniqueGram}{\textsc{Get-Unique-Gram}}
	\SetKwFunction{splitRegex}{\textsc{Extract-Literals-In-Regex}}
	\SetKwFunction{setBits}{\textsc{Set-BitVector}}
	\SetKwFunction{bitMask}{\textsc{Get-Bitmask}}
	\SetKwFunction{add}{\textsf{add}}
	\SetKwFunction{append}{\textsf{append}}
	\SetKwFunction{contains}{\textsf{contains}}
	\SetKwFunction{size}{\textsf{size}}
	\SetKwFunction{len}{\textsf{len}}
	\SetKwProg{myalg}{procedure}{}{}
	\SetKwProg{myproc}{\textsc{procedure}}{}{}
	\SetKwData{return}{\textbf{return}}
        \SetKwData{allone}{\textsf{allOne}}
	\Input{Log $L$, Index $I$, Regex $r$, $n$, hash-map $G$ from $n$-gram to offset}
        $m \gets$ \bitMask{$r$, $n$, $G$} \; 
        $\allone \leftarrow$ $k$ bit array with all bits set to $1$ \;
	\For{$j\gets 0$ to \size{$L$}}{
            \If{$I[j] \lor m =  \allone$ \label{line:filter}}{
                do Regex-Match($L_j$, $r$) \label{line:regex}
            }
        }
        \myalg{\bitMask{r, n, G}}{
            $T \gets$ \splitRegex{r} \tcc*{$T$ is the list of string literals in the regex}  
            $S \leftarrow \emptyset$ \tcc*{empty set}
            \ForEach{t $\in T$}{
                $S_t \leftarrow \uniqueGram{r, n}$\;
                $S.$\add{$S_t$}
            }
            $m \gets$ \setBits{$S$, $G$} \label{line:setbit} \;
            $m \leftarrow \neg m$  \label{line:flip} \tcc*{flipping all the bits}
            \return $m$ 
        }
	\caption{{\sc BitVector-Index-Query}}
	\label{algo:index_query}
\end{algorithm}
Leveraging the created index, when querying over the dataset with a regular expression $r$, we build a bit-mask for the literal components of $r$ and filter out part of the logs that do not match.
We present the steps for querying a regex on the dataset with a bit-vector index in~\Cref{algo:index_query}.
When a regular expression query is received, it is first parsed to generate the subset of bigrams that are used in the index. Then, we construct a bit-mask of size $k$, {placing a bit value of \texttt{1} in the $i^{th}$ index to denote that a bigram is \emph{not} present in the regex.} In particular, if a bigram $g$ that is used in indexing but is absent in the regex, we set $G[g]^{\text{th}}$ bit of the mask to $1$, and $0$ otherwise. 

\begin{figure}[!tp]
 \centering
\includegraphics[width=0.85\columnwidth]{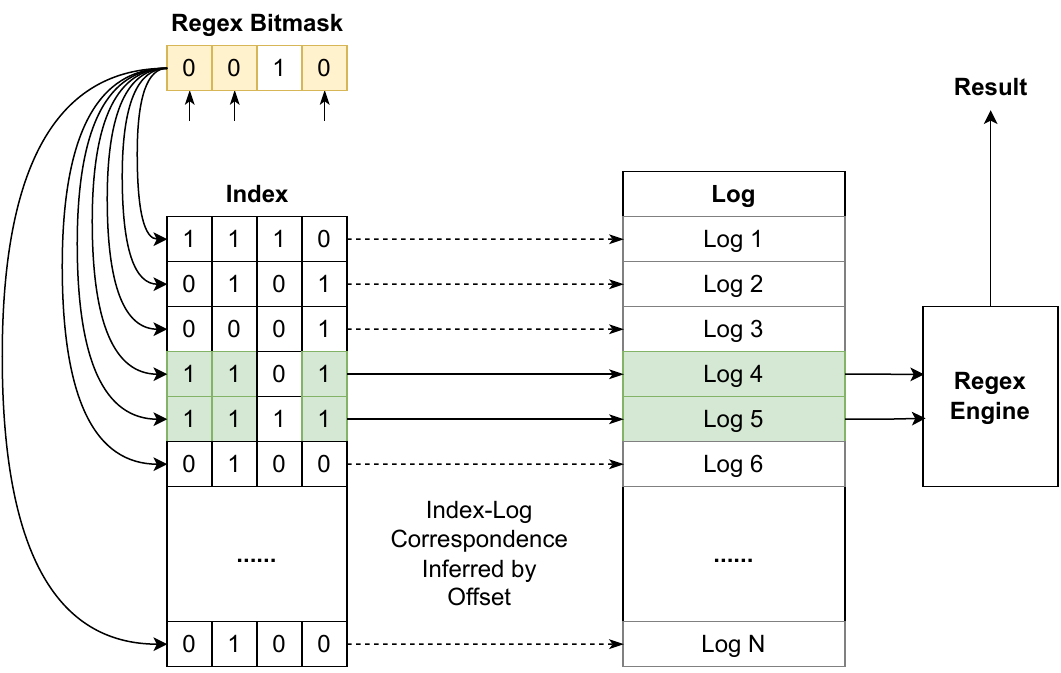}  
\caption{Query overview using a bit-vector index with $k=4$.}
\label{fig:framework}
\end{figure}

We scan through the array of bit-vectors and do a bitwise OR operation for each entry with the bitmask of the regex. The log line passes the index's initial filter if the outcome of the OR operation is a $k$-bit vector of all \texttt{1}'s. {Here, we leverage the negative index property: a set bit in the bitmask indicates absence of a bigram and thus, its presence or absence in the log line is inconsequential. The OR operation effectively verifies this condition. Observe that for bigrams that are not used in indexing, we always need to read the log line.}  {The index does not produce false-negatives, and we include the proof in \correctProof{} of the full paper.}
{Our implementation use the C++ \texttt{all()} function to check if all bits are set. This is done to implicitly capture the negative index property. }

An example of the querying process is illustrated in~\Cref{fig:framework}. We use four bigrams in the index ($g_0$ to $g_3$). Suppose the query $q$ has the bigrams $g_0$, $g_1$, and $g_3$. Then, the bitmask generated will be $0010$. Only the bit-vector of log lines four and five lead to a vector of all 1’s when a bitwise OR operator is performed with $0100$. 
In~\Cref{fig:framework}, the log lines corresponding to index entries with all relevant bits set to 1 (colored green) pass the filter.

Only the log lines whose bit-vector representations match the derived bigrams are considered for further processing.
This significantly reduces the number of log lines that need to be processed, ensuring efficiency.
For the log lines that pass this initial filtering, a full regex matching process is initiated (line~\ref{line:regex}). We employ the Google-RE2~\cite{re2} regex library for this purpose, given its robustness and efficiency in handling complex regular expressions.

\subsubsection{Unknown Workload And Future Queries} \label{subsec:unknown_wl}
In the previous section, we discussed the effectiveness of \sys{} given a fixed known workload. However, \sys{} can also help with a new set of queries on the same set of log data or an unknown workload. 
{This is particularly relevant for ad-hoc querying scenarios where analysts may need to search logs with previously unseen regex patterns.}

\sys{} addresses this challenge by leveraging a key characteristic of log analysis workloads: both logs and queries contain substantial amounts of human-readable text. Log analysis queries typically aim to extract variable values (e.g., using \texttt{\textbackslash{}d+} to extract numeric vmID) while filtering based on literal English text components (such as \texttt{vmID=}). This suggests that bigrams in log analysis workloads have distributions similar to English text, and that bigrams selected from English literature have a high probability of appearing in future log analysis queries.

Our strategy for unknown workloads involves using the most frequent bigrams from English literature as index keys. This approach significantly reduces the computational overhead of index construction while maintaining effectiveness comparable to indexes built with known query sets.
{The limitation of this strategy arises when the distribution of literal components in the query set significantly diverges from that of natural English language. This is especially true when the unknown query set exhibits domain-specific vocabulary.
Therefore, \sys{}’s performance on unknown workloads is most reliable when they retain sufficient overlap with general human-readable English patterns.
We demonstrate the effectiveness and limitations of this strategy in~\Cref{subsec:qgram_english}.}


\section{Evaluation} \label{sec:expr}

We implemented \sys{} using C++ and used the state-of-the-art regular expression library, Google's RE2, for any exact regex matching after passing the index checking. We also implemented an inverted index \Ling{and signature files }under the same setting. 
{For comparison purposes, we implement posting lists and inverted indexes in-memory using hashmaps, where each $n$-gram serves as a key and maps to a list (or set) of identifiers indicating which log lines or groups contain that $n$-gram. During query processing, the relevant posting lists are retrieved and intersected to find candidate log lines that contain all required $n$-grams. We use these terms interchangeably in the section.
}
We want to answer the following questions:

\begin{enumerate}[label=\textbf{Q.\arabic*}, ref=\textbf{Q.\arabic*}]
    \item How does the chosen value of $n$ for $n$-grams of the index affect the overhead of constructing the index and the improvement of performance (cf. Section~\ref{subsec:qgram_type})? \label{q1}
    \item How does the chosen value of $k$ (the number of $n$-grams used in the index) affect the overhead of constructing the index and the improvement of performance (cf. Section~\ref{subsec:qgram_num})?\label{q2}
    \item {How does the index impact query performance when we have an unknown log analysis workload (cf. Section~\ref{subsec:qgram_english})?\label{q3}}
    \eat{\item How does the existence of prior knowledge of regex workload impact the performance improvement of the constructed index (cf. Section~\ref{subsec:qgram_english})?\label{q3}}
    \item How does the granularity of the index affect the overhead of constructing the index and the improvement of performance (cf. Section~\ref{subsec:idx_granu})?\label{q4}
    \item How does \sys{} compare with other commonly used indexing schema on the overhead of constructing the index and the improvement of performance (cf. Section~\ref{subsec:other_idx})?\label{q5}
\end{enumerate}

\subsection{Experiment Setup} \label{subsec:expr:setup}
In our experiments, we construct indices using bigrams, trigrams, and 4-grams. The index is physically represented by a $k$-bit vector using a C++ bitset for each log line. The bit-vector indicates the presence of the top $k$ $n$-grams.  For regex matching, the regular expressions are first parsed to derive all $n$-grams intersecting with the $n$-grams used in the index. The $n$-grams in the intersection are encoded in a length $k$ bit mask, facilitating the initial filtering of log lines using the index. Only the log lines that pass the filtering are used as input for the full regex matching process using RE2~\cite{re2} which is widely used in the industry. We use \blare{}~\cite{blare} with RE2 as the backend supporting regex engine, as the performance baseline {, which is faster than RE2 without indexing}. 
For simplicity, all experiments are single-threaded and in the main memory setting.

\subsubsection{Workloads.} 

\begin{table}
\caption{Statistical information of bigrams in the literal component of regular expression queries in the workload.}
\label{table:bigram_stats}
\vspace{-1em}
\scalebox{0.99}{
\begin{tabular}{ c| c| r r r r r } 
\toprule
\multirow{2}{*}{Workload} & \multirow{2}{*}{\# Total Bigrams} & \multicolumn{5}{c}{\% Occurrence in Regexes}\\ \cline{3-7}
& & min & Q1 & mid & Q3 & max \\ \hline
DB-X & 762 & 0.8& 0.8& 3.0& 11.4& 89.4 \\\hline
Sys-Y & 138 & 5.9& 5.9& 11.8& 17.6& 41.2 \\\hline
US-Acc. & 18 & 25.0& 25.0& 25.0& 50.0& 75.0 \\
 \bottomrule
\end{tabular}}
\vspace{-1em}
\end{table}

Our experiments employ three real-world datasets and their respective query workloads. Two datasets include logs produced by production database systems, whereas the third is text-oriented. For each workload, detailed statistical information about the literal elements in regexes is shown in~\Cref{table:bigram_stats}.

\introparagraph{\wlsqlfull{} Workload} A 13.5 GB dataset with 8,941 regexes and 101,876,733 log lines from DB-X was obtained from a well-known cloud provider. These data were utilized in log analysis tasks by various data scientists and engineers. {When examining all the regex queries in the original DB-X workload, the majority of the queries had no matches on the dataset, which represents logs from a specific time duration. Thus, for a realistic representation of the workload, we consulted an expert data scientist working in this domain and sampled regexes favoring those with matches on the dataset. For completeness, we report the experimental results for the entire set of 8,941 regexes in Section~\ref{subsec:expr:db_full}.
} We picked 132 regexes from their workloads and evaluated the regexes on the dataset to construct \eat{logs in our workload,} \wlsqlfull{} (\textbf{DB-X}). In this workload, a log line has an average of 138.5 characters. The regexes used in this task are characterized by an average of 105 literal characters per regex and 39.9 characters per literal component. From the literal components of the regexes, 762 unique bigrams were extracted, with the majority appearing in just one regex.

\eat{
\introparagraph{\wlsqlfull{} Workload} We obtained a dataset from a well-known cloud provider that contains 8,941 regexes and 101,876,733 DB-X log lines totaling 13.5 GB that were used in various data scientists' and engineers' log analysis operations. We picked and tested 132 regexes at random from the logs to create the workload, which we called DB-X.  In this workload, a log line typically has 138.5 characters. 
The regexes in this task are distinguished by having an average of 105 literal characters per regex and 39.9 characters per literal component. There are 762 unique bigrams extracted from literal components in the regexes, and majority of them occur in only one regex.
}

\introparagraph{\wlkusto{} Workload} This dataset is 100 GB in size and consists of 890,623,051 log lines generated by System-Y, a production data exploration tool. The associated workload includes 17 regexes used by data scientists for analysis tasks. The log lines have an average length of 116.2 characters. We also use the shorthand \textbf{Sys-Y} to denote this workload. Compared to \textbf{DB-X}, the regexes in this workload often have fewer literal components per regex and shorter literal components overall. We get 138 unique bigrams from the literal components of the regex queries. Similarly, most of the bigrams are unique and appear only in one regex.

\introparagraph{\wltraffic{} Workload} The dataset, sourced publicly and collected by Moosavi et al.~\cite{Moosavi19}, has 1.08 GB of data and includes 2,845,343 records of traffic accidents that occurred in the United States between February 2016 and March 2019. For our objectives, we used the 4 regexes described in their paper along with the \textit{accident description} strings from the dataset. There are 66.6 characters on average per line. We extracted 18 unique bigrams from the query set, and among them, slightly more than half occurs in only on regex, and one of the bigrams occurs in 3 regexes.

\subsubsection{Methodology} We compare a number of metrics, such as index building time, index size, and regex matching execution time, across various index configurations. The time required to extract and select the $n$-grams and then build the index is used to compute the index construction time. We measure the bit-vectors' overall size, including any padding, as the size of the index. The cumulative time required for each regex in the workload to match against all log lines in the dataset is measured for determining the workload running time, using indices when applicable. The runtimes are calculated using a trimmed mean of 10 runs, which leaves out the greatest and lowest runtimes. 

Indices were created using the bigram, trigram, or 4-gram that appeared the most frequently in the literal components of the workload queries. Depending on the workload and experiment design $n$-grams with $n$ varying from $4$ to $512$ may be used to build the index. Additionally, indices were created with index granularity of 1, 8, 64, 192, 256, or 512 lines. Detailed methodologies for individual experiments will be described in the respective subsections.\eat{ prior to presenting the results.}

\subsubsection{Hardware.} We run all experiments on an Azure Standard\_E32-16ds\_v5 machine with Intel(R) Xeon(R) Platinum 8370C CPU @ 2.80GHz with 16 vCPUs, 256 GB memory, and 1 TB hard disk. Our experiment code is written in C++17 and compiled with the -O3 flag. We used RE2 (release version 2022-06-01) and the version of BLARE (state-of-the-art system for regex processing over log data) retrieved on 09-01-2023. 

\eat{
\subsection{Experiment Results}

\begin{itemize}
    \item Compare across: query performance, index construction, and index size with different type of $n$-grams
    \item Compare across: query performance, index construction, and index size with different number of $n$-grams 
    \item Compare the performance by indexing $k$ most popular trigrams in the regex workload vs. most popular character trigrams in English language.
    \item Performance of coordinately search and update the index; compare performance difference of search only and search + update 1 ngram; hope to get small update overhead - no need
\end{itemize}
}

\begin{figure}[!tp]
\begin{subfigure}{0.44\columnwidth}
  \centering
  \includegraphics[width=0.75\columnwidth]{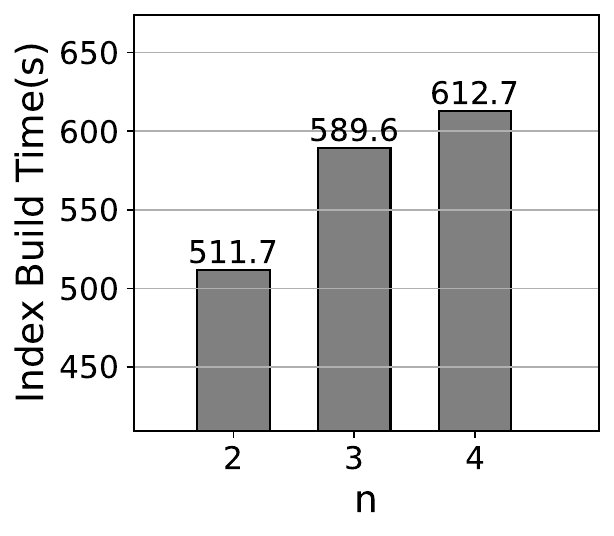} 
    \vspace{-1em}
  \caption{Index Construction}
  \label{fig:type_build}
\end{subfigure}
\hfill
\begin{subfigure}{0.44\columnwidth}
  \centering
  \includegraphics[width=0.75\columnwidth]{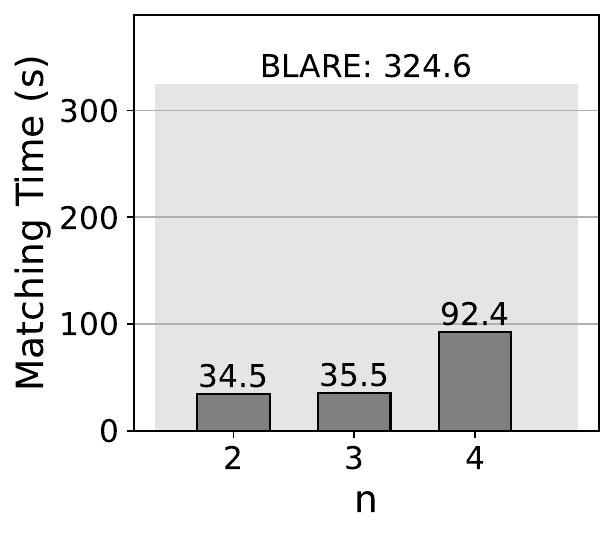} 
    \vspace{-1em}
  \caption{Regex Matching}
  \label{fig:type_match}
\end{subfigure}
\vspace{-.6em}
\caption{Comparing the impact of different types of $n$-grams on index construction time and matching time on indexes with the top 64 most frequent $n$-grams in workload queries.}
\label{fig:gram_type}
\end{figure}

\subsection{Type of $N$-Gram}\label{subsec:qgram_type}
In this section, we address \ref{q1} by evaluating our strategy through a comparison of different types of $n$-gram used for index building. Specifically, we compare the query performance improvement and index construction overhead in terms of space and time. The top 64 most frequent $n$-grams, ranging from bigrams to 4-grams, were extracted from the regular expression queries of \wlsqlfull{} workload. For each log line, a bit-vector of size 64 was constructed to denote the presence (or absence) of these $n$-grams. 
\eat{The construction time was measured from the onset of the indexing process up to the complete representation of the bit-vector for all log lines. }

\subsubsection{Index Construction Overhead}
The bit-vector index size remains constant across different types of $n$-grams, as {each $n$-gram is represented in an index entry with one bit regardless of the value of $n$.}
\eat{a 64-size bit-vector is used for each log line.} Specifically, the size depends solely on the number of $n$-grams, $k$, ($k=64$ in this case) and the total number of log lines. Consequently, {for a fixed $k$,} the space overhead of bit-vector index is directly proportional to the number of {log lines.}
\eat{in log lines rather than the type or length of $n$-grams used.}

In~\Cref{fig:type_build}, we present a summary of the index construction time for the \wlsqlfull{} workload, comparing bigrams, trigrams, and 4-grams.
For bigrams, the index took 511.7 seconds to build. Bigrams are short and simple, and their extraction and subsequent indexing are fairly quick. Building the trigram index took slightly longer, at 589.6 seconds. 
\eat{The result of 4-grams, having largest $n$ among bigram, trigram, and 4-gram, in~\Cref{fig:type_build},
shows that the length of $n$-grams has a subtle impact on the extraction time and, consequently, the overall building time with an index construction time of 612.7 seconds. } 
The 4-grams, being the longest among the three types of $n$-grams, took the longest time to construct the index at 612.7 seconds. This shows that the length of the $n$-grams subtly impacts
\eat{the extraction time and, consequently, }
the overall index construction time.
The subtle difference in time for different $n$ can be traced back to the number of unique $n$-character sequences extracted from the log and the hashing strategy of character tuples.
As the length of the $n$-grams increases, the number of unique $n$-grams also increases. This translates to 
{an increase in the number of hashmap probing and higher probability of hash collision {(recall that our index contruction algorithm uses hashmaps on~\autoref{line:S} and~\autoref{line:T} of~\Cref{algo:index_construct}}), and consequently, }
an increase in the time required for index construction. 

\eat{\begin{table}[t]
\captionof{table}{Comparison of mean percentage of log lines processed by regex matching engine after filtering with bit-vector index for different types of $n$-grams used in the index.}
\centering
\begin{tabular}{ c| c| | r | r | r  }\hline
\multicolumn{2}{c||}{} & \multicolumn{3}{c}{\% Log Passed Index}\\ \hline
\multicolumn{2}{c||}{$n $} & 2 & 3 & 4 \\ \hline
\multirow{2}{*}{$k$} & 64 & 0.63 & 0.58 & 4.99 \\ \cline{2-5}
& 128 & 0.12 & 0.35 & 3.13 \\ \hline
\end{tabular}
\label{tab:log_filter}
\vspace{-1em}
\end{table}}

\begin{table}[t]
\captionof{table}{Comparison of mean percentage of log lines processed by regex matching engine after filtering with bit-vector index for different types of $n$-grams used in the index. \eat{\sd{"log lines matched by regex matching engine" means log lines processed by the engine, correct? Not actually producing a match? If yes, consider using 'processed' to reflect this difference.}}}
\vspace{-.5em}
\centering
\begin{tabular}{ c| | r | r | r  }\hline
$n$ & 2 & 3 & 4 \\ \hline
Log \% filtered through the index & 0.63 & 0.58 & 4.99 \\ \hline
\end{tabular}

\label{tab:log_filter}
\end{table}

\subsubsection{Matching Time with Index}

To compare the query efficiency of our indexing approach {with varying $n$-gram sizes}, we benchmark the performance of regular expression matching on \wlsqlfull{} workload using indices constructed with bigrams, trigrams, and 4-grams. We report the time taken to match all 132 regexes against the log dataset using each $n$-gram index type. From ~\Cref{fig:type_match}, we found notable differences in matching times. 
Using the bigram index, the full set of workload queries runs in 34.5 seconds.
The trigram index results in a slightly slower matching time of 35.5 seconds.
The matching process takes significantly longer with the 4-gram index, at 92.4 seconds, compared to the bigram and trigram indices.
\eat{Using the 4-gram index, the matching process takes 92.4 seconds, which is significantly longer than the runtime for bigram index and trigram index.} 
Despite the increase in time with 4-grams, all indexed methods were substantially faster than the baseline of 324.6 seconds, which is the time taken to match using the state-of-the-art regular expression matching framework without using an index. 

After leveraging the indices to filter potential matches, we summarize the mean percentage of log lines that remain for each regex in~\Cref{tab:log_filter}.
The bigram and trigram indices leave 0.63\% and 0.58\% of log lines respectively, showing similar filtering efficacy.  {The percentages suggest that trigram index has slightly higher actual \textit{benefit}, but both bigrams and trigrams are effective in the task.} This is reflected in their relatively similar matching times of 34.5 and 35.5 seconds.
However, the 4-gram index retains a larger subset of log lines at 4.99\%{, signifying a much lower actual \textit{benefit}}, despite its longer construction time. 
\eat{This bigger pool of data demanding regex matching aligns with its substantially longer matching time of 92.4 seconds, indicating a direct correlation between the volume of data sent to the regex engine and the resulting total matching time. }
This larger data pool for regex matching results in a longer matching time of 92.4 seconds, indicating a direct correlation between the volume of data sent to the regex engine and total matching time.
The choice of $n$-gram length can significantly impacts the construction overhead and the performance of the subsequent regex matching process.

\eat{
From the table, we can see that 
bigram index left about 0.48\% of log lines for further matching using the regex engine.
trigram index retained approximately 0.58\%.
Notably, the 4-gram index left a significantly larger subset, 4.99\% of log lines, for the regex matching process. The coverage difference contributes to the longer matching time for index built with 4-grams compared to bigrams and trigrams. 

\TODO{Due to Coverage of the regex and/or $n$-gram extraction performance difference as shown by the index construction time difference? add more analysis.}
}

\subsection{Number of $N$-Gram}\label{subsec:qgram_num}
\begin{figure}[!tp]
 \centering
\includegraphics[width=0.75\columnwidth]{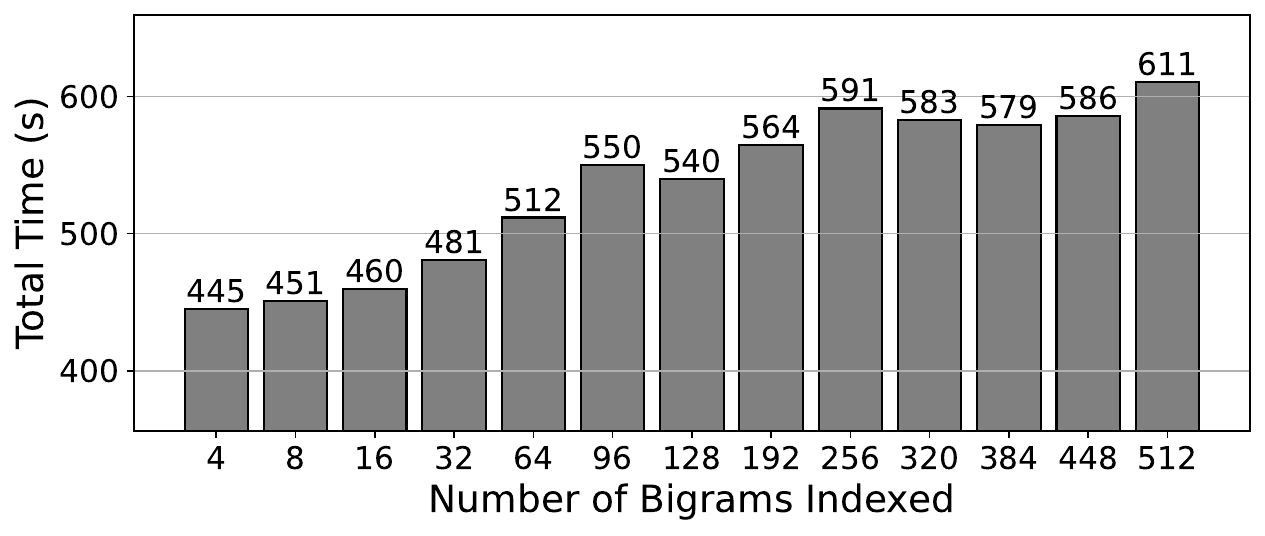}  
\vspace{-1em}
\caption{Comparing the impact of different numbers of $n$-grams on index construction time of the indices. Uses top-$k$ most frequent $n$-grams in workload queries.}
\label{fig:num_ngram_build}
\end{figure}

\eat{
\begin{figure}[!tp]
 \centering
\includegraphics[width=0.85\columnwidth]{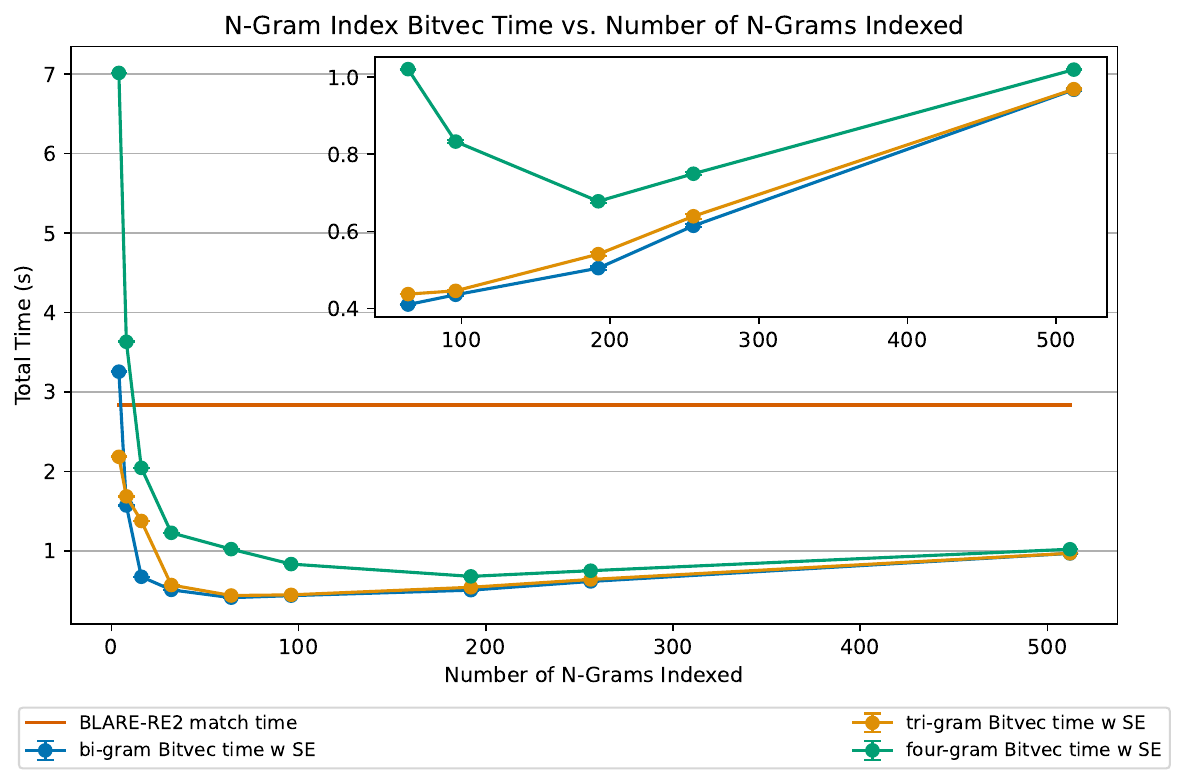}  
\caption{Comparing the impact on matching time for different numbers of $n$-grams indexed. Uses top-$k$ most frequent $n$-grams in workload queries.}
\label{fig:num_ngram_match}
\end{figure}
}

\eat{
\begin{figure}[!tp]
 \centering
\includegraphics[width=0.85\columnwidth]{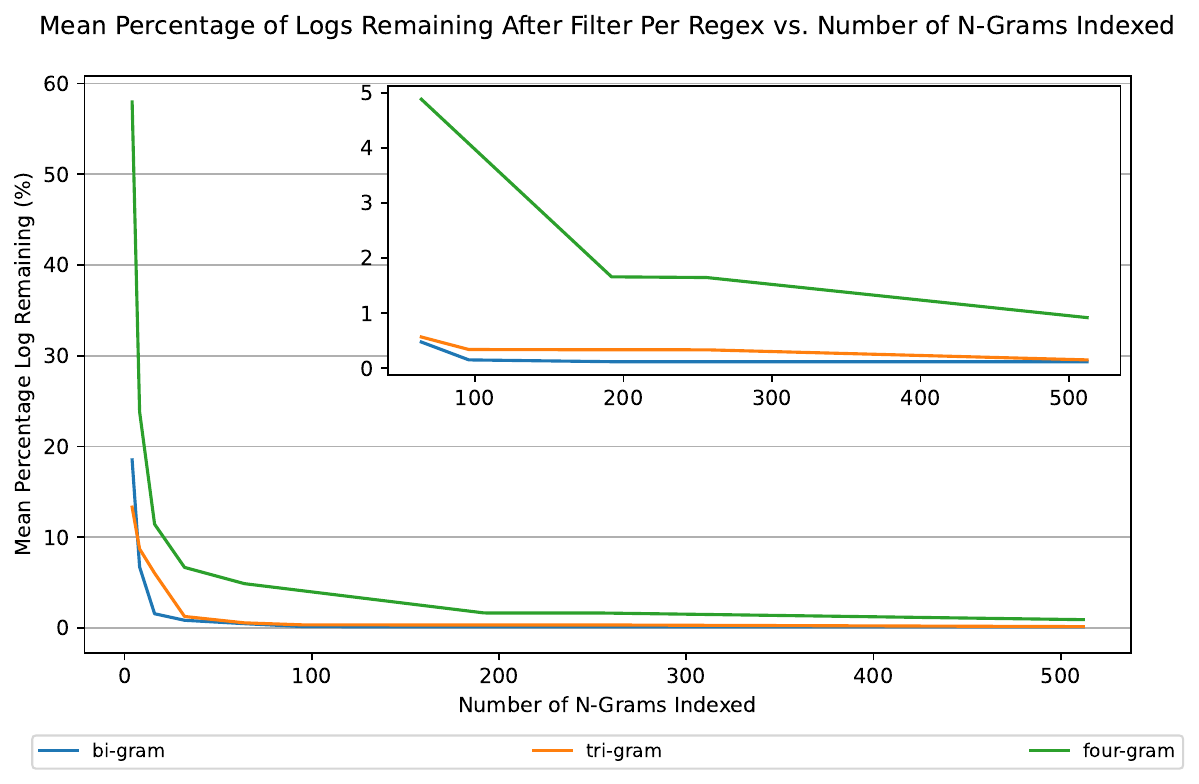}  
\caption{Comparing the mean percentage of log lines processed by regex matching engine after filtering with bit-vector index. Uses top-$k$ most frequent $n$-grams in workload queries.}
\label{fig:num_ngram_perc_filter}
\end{figure}
}

In this section, we answer \ref{q2} by comparing the number of $n$-grams used for index building. Specifically, we compare the performance gain and index construction overhead across using the most frequent $k$ $n$-grams in \wlsqlfull{} workload, where choices of $k$ are 4, 8, 16, 32, 64, 96, 192, 256, and 512. We use bigrams, trigrams, and 4-grams for all $k$. Similar to the previous experimental setting, one bit-vector of size $k$ in the index corresponds to one log line.

\subsubsection{Index Construction Overhead}

\begin{table}[t]
\captionof{table}{Comparison of the size of the bit-vector index for different $k$ values.} 
\vspace{-0.7em}
\centering
\begin{tabular}{ c| |  r  |  r |  r |  r |  r |  r |  r |  r   }\hline
$k$ & 4 - 64 & 96 - 128 & 192 & 256 & 320 & 384 & 448& 512\\ \hline
Size (GB) & 0.8 & 1.5 & 2.3 & 3.0 & 3.8 & 4.6 & 5.3 & 6.1 \\ \hline
\end{tabular}
\vspace{-1em}
\label{tab:index_size}
\end{table}

In this subsection, we evaluate the impact of the number of $n$-grams, $k$, on the index size and the construction time. The number of bigrams selected determines the length of each bit-vector representation for a log line, thereby directly affecting the size and construction time of the index.

\Cref{tab:index_size} presents the comparison result of index sizes for different numbers of bigrams used. Due to word-size and padding in C++ bitset, the index size stays constant when indexing with 4 to 64 bigrams. Packing multiple index entries together can further reduce the index size for storage and loading.
The index size grows from 0.8 GB for $k=64$ to 1.5 GB for $k=512$.
This trend shows that integrating more bigrams into the index encapsulates more information, increasing the filtering power by eliminating more unmatched log lines, at a cost of more space.
We also analyze the index construction time for various numbers of bigrams, as depicted in~\Cref{fig:num_ngram_build}. 
Construction time marginally increases from 445.2 seconds with 4 bigrams to 610.6 seconds with 512 bigrams.
Thus, a $100 \times$ increase in the number of bigrams results in just a $30\%$ increase in index construction time. The slight variation in index construction time, despite the increase in bigrams, highlights the low cost of the bit-setting operation {and \sys{}'s
\eat{. The minor growth in construction time, relative to the index size growth, underscores our approach's }
ability to balance the filtering power with index construction overhead.}

\begin{figure}[!tp]
 \centering
\includegraphics[width=0.75\columnwidth]{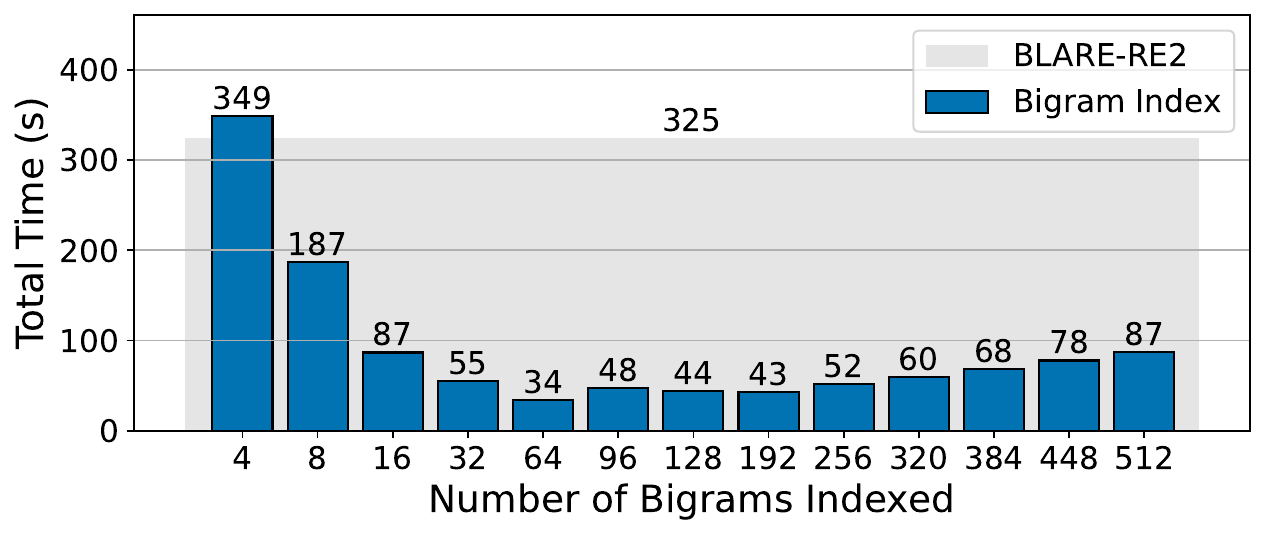}  
\vspace{-1em}
\caption{Comparing the impact on matching time for different numbers of $n$-grams indexed. Uses top-$k$ most frequent $n$-grams in workload queries.}
\label{fig:num_ngram_match}
\end{figure}

\begin{figure}[!tp]
 \centering
\includegraphics[width=0.75\columnwidth]{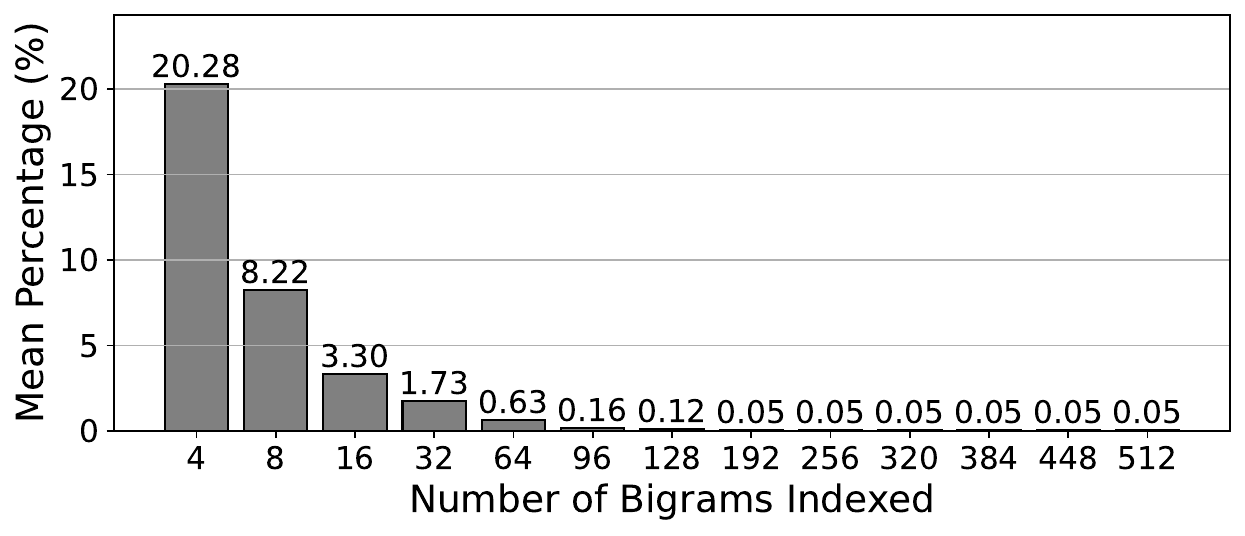}  
\vspace{-1em}
\caption{Comparing the mean percentage of log lines processed by regex matching engine after filtering with bit-vector index. Uses top-$k$ most frequent $n$-grams in queries.}
\label{fig:num_ngram_perc_filter}
\vspace{-1em}
\end{figure}

\subsubsection{Matching Time with Index. }

Now we compare the overall matching time for the \wlsqlfull{} workload across different numbers of bigrams used to construct the bit-vector index. \Cref{fig:num_ngram_match} shows the regex matching time corresponding to each $k$, the number of bigrams used.

The baseline, \baseline{}, which operates without any indexing, has a matching time of 324.6 seconds for the given workload. An index constructed with 4 bigrams results in a matching time of 348.9 seconds.
\eat{A notable reduction is observed as the matching time drops to 187.1 seconds with the use of an index built with 8 bigrams, eventually dropping to a minimum of 34.5 seconds when 64 bigrams are employed. }
The matching time drops to 187.1 seconds with an index built with 8 bigrams and further decreases to a minimum of 34.5 seconds when using 64 bigrams in the index.
Beyond 64 bigrams, the matching time gradually increases, reaching 87.4 seconds for the index constructed with 512 bigrams.
~\Cref{fig:num_ngram_perc_filter} provides a visual representation of the percentage of log lines remaining after index lookup for subsequent regex matching.

From 4 to 64 bigrams, the matching time drops significantly as the proportion of log lines passing the index filter decreases from 20.28\% to 0.63\%.
\eat{The substantial reduction in matching time from the use of 4 to 64 bigrams can be directly attributed to the steep decrease in the proportion of log lines that pass the index filter, which decreases from 20.28\% to 0.63\%. }
Consequently, fewer log lines are subjected to the expensive regex matching. 
A more moderate decline of the percentage is observed as $k$ increases from 64 to 128. It drops to 0.12\% when there is a rise in matching time in~\Cref{fig:num_ngram_match}. 
The gains from reduced regex matching {provided by the decreased number of log lines after index filtering} are offset by the incremental overhead introduced by the increasing complexity and size of the index. 
While the index lookup remains a low-cost operation, the increasing number of bigrams introduces additional overhead for bit operations across multiple bytes, affecting the overall performance of the matching process.
Therefore, as the percentage stabilizes at approximately 0.05\% for $k$ from 192 to 512, the matching time increases instead.

In the \wlkusto{} workload, the blue bars in~\Cref{fig:english_kusto} demonstrate that the index created using bigrams from queries outperforms the baseline after indexing with 16 or more bigrams. The overall query runtime decrease sharply as the $k$ increase from 4 to 16. The performance gradually plateaus as more bigrams are indexed beyond 16. {When $k=64$, it achieves more than $7\times$ query performance improvement compared to the baseline.} \Cref{fig:english_traffic} shows that the {$n$-gram index} provide performance {gain} for \wltraffic{} workload {for $k>=2$, and the gain increases as more bigrams are indexed, achieving a near $5\times$ speedup when $k=16$.}
With only 4 queries containing short literals, each indexed bigram appears in at least 25\% of the queries, having similar filtering power for the log lines.

We observe that although increasing the number of bigrams indexed improves performance by reducing the number of log lines subjected to regex matching, it also requires careful consideration on the runtime overhead introduced by the additional bigrams.
Balancing these factors is crucial for achieving optimal matching performance in regex log analysis.

\eat{Existence of Prior Knowledge}
\subsection{Unknown Workload}\label{subsec:qgram_english}
In this section, we answer \ref{q3} by
{showing if \sys{} can improve the log analysis workload performance without knowing the workload in advance.}
\Ling{Due to the incapability of existing n-gram selection methods in handling unknown log processing workloads, we compare our index performance to the baseline where the queries are assumed to be and usable for index construction.}
\eat{examining the impact of prior knowledge of workload characteristics on the performance of the index.}
Specifically, we conduct a direct comparison of the matching performance achieved with indices built from the most frequent bigrams in the queries (i.e. having prior information about string literal distribution in the queries){, \qidx{},} to those created using the most common bigrams in an English corpus~\cite{english_bigram} (i.e. the query workload is not known){, \eidx{}} , across three real-world workloads: \wlsqlfull{}, \wlkusto{}, and \wltraffic{}.
\eat{\sd{We should use \qidx{} and \eidx{} in the legend for plots in Figure 8.}}
\eat{Specifically, for 3 real-world workloads \wlsqlfull{}, \wlkusto{}, and \wltraffic{}, we directly compare the construction overhead and matching performance achieved with indices built from the most frequent bigrams in the queries  against those created using the most common bigrams in an English corpus~\cite{english_bigram} . }

\eat{\subsubsection{Index Construction Overhead.}}

Given that the indices are constructed using identical data structures and an equivalent number of bigrams, the sizes of the indices remain the same. 
\eat{Moreover, we observe comparable time frames for the construction of indices across both sets of bigrams. }
This similarity in construction time can be attributed to the same index structure and bigram count.
\eat{, underscoring the consistency of the indexing process regardless of the bigram set used.}

\eat{
\begin{figure}[!tp]
 \centering
\includegraphics[width=\columnwidth]{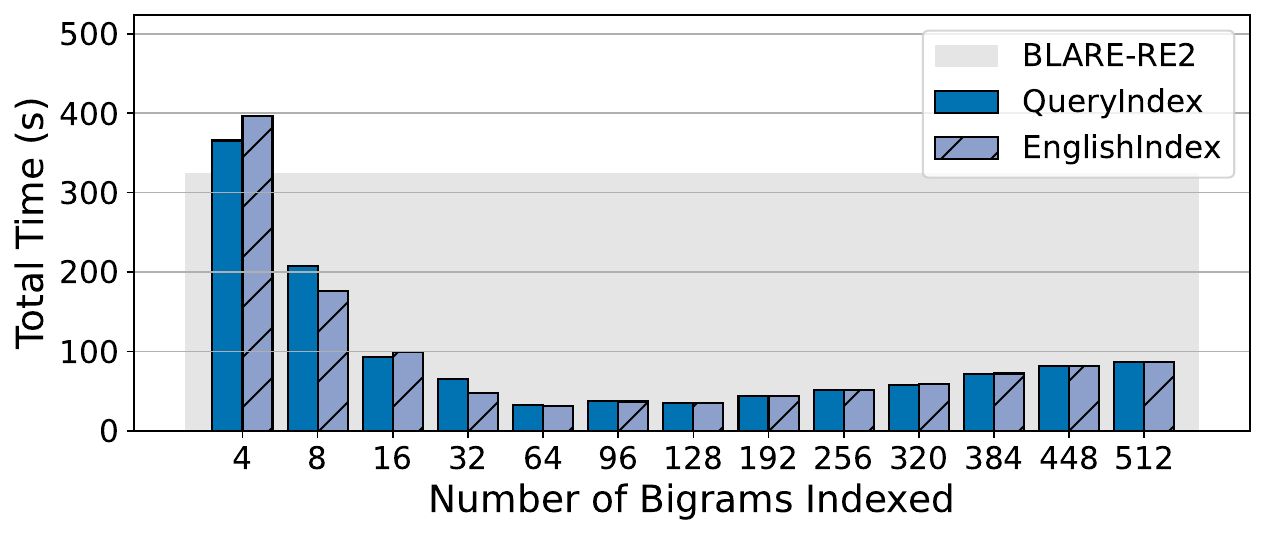}  
\caption{Comparing the impact on matching time for different numbers of bigrams indexed. Uses top-$k$ most frequent bigrams that appeared in the workload query and top-$k$ most frequent bigrams appeared in English literature.}
\label{fig:prior_stats_match}
\end{figure}

\begin{figure}[!tp]
 \centering
\includegraphics[width=\columnwidth]{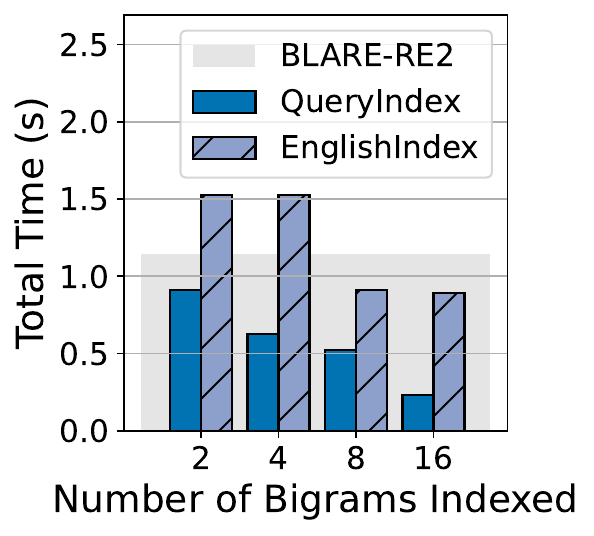}  
\caption{Comparing the impact on matching time for different numbers of bigrams indexed. Uses top-$k$ most frequent bigrams appeared in the workload query and top-$k$ most frequent bigrams appeared in English literature.}
\label{fig:prior_stats_match}
\end{figure}
}

\begin{figure}[!tp]
\begin{subfigure}{.95\columnwidth}
  \centering
  \includegraphics[width=.75\columnwidth]{figs/Bigram_english_SearchTime_errbar_.pdf}  
  \vspace{-0.5em}
  \caption{\wlsqlfull{} Workload}
  \label{fig:english_sqlserver}
\end{subfigure}\\
\vspace{0.8em}
\begin{subfigure}{.55\columnwidth}
  \centering
  \includegraphics[width=.8\columnwidth]{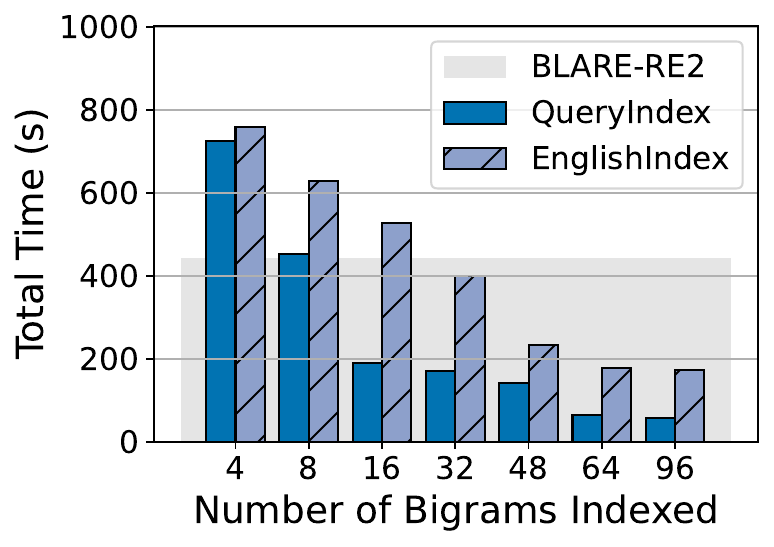}  
  \vspace{-0.5em}
  \caption{\wlkusto{} Workload}
  \label{fig:english_kusto}
\end{subfigure}
\hfill
\begin{subfigure}{.43\columnwidth}
  \centering
  \includegraphics[width=.8\columnwidth]{figs/Bigram_english_Traffic_SearchTime_errbar_.pdf}  
  \vspace{-0.5em}
  \caption{\wltraffic{} Workload}
  \label{fig:english_traffic}
\end{subfigure}
\vspace{-0.5em}
\caption{Comparing the impact on matching time for different numbers of bigrams indexed. Uses top-$k$ most frequent bigrams appeared in the workload query and top-$k$ most frequent bigrams appeared in English literature.}
\vspace{-.5em}
\label{fig:prior_stats_match}
\end{figure}

\subsubsection{Matching Time with Index.}
We explore the performance differences between indices created using the most frequent bigrams from the queries, \qidx{}, and those from an English corpus, \eidx{}. Examining~\Cref{fig:prior_stats_match}, we observe similar performance in workload queries for indices built with the two sets of bigrams. Also,~\Cref{fig:prior_stats_match} illustrates the impact of using different $k$, the numbers of bigrams for indexing, across the three workloads. 
Both indices show a consistent decrease in overall runtimes as the $k$ indexed increases, particularly for workloads \wlkusto{} and \wltraffic{}.
\eat{The decrease in overall running time for both indices as the quantity of bigrams indexed increases is consistent for both indices and for workloads \wlkusto{} and \wltraffic{}. }
For \wlsqlfull{} workload, runtimes using indices from the two sources follow a similar trend that decreases sharply as the $k$ increases, reaches a minimum at 64 bigrams for indexing, and then gradually increases as the number of bigrams indexed increases.

The \wlsqlfull{} workload exhibits similarity with the English language in terms of bigram distribution, as it has the largest number of queries, each of which contain long literal components of human-readable text. From~\Cref{table:bigram_stats}, we can see that 89\% of the log entries contain the most frequently occurring bigram, and 25\% of all 762 bigrams have frequency percentages above 11\%. \Cref{fig:english_sqlserver} presents the performance of the two indices built with a varying number of bigrams. When indexing 8 or more bigrams, both indices outperform the baseline performance {by as much as $10\times$ with 64 bigrams}, indicating that performance improvement can be achieved with a relatively small number of bigrams. We notice a minor difference in performance between the two indices when using bigrams from 4 to 64. This discrepancy in performance becomes negligible as the number of indexed bigrams continues to rise.

The \wlkusto{} workload consists of fewer queries, and there is noticeable dissimilarity across the majority of them. From~\Cref{table:bigram_stats}, we can see that half of the 138 bigrams appear in more than 11.8\% of the queries, and the most frequently occurring bigram is found in 41.2\% of the queries. In \Cref{fig:english_kusto}, \eidx{} needs at least 32 bigrams to surpass the baseline runtime. In contrast, \qidx{} requires only 16 bigrams {and achieves close to $8\times$ speedup when $k$ reaches 96.} The performance results of \wlkusto{} workload exhibits a consistent trend in which \qidx{} outperforms \eidx{}. While the numbers are similar when indexing with 4 bigrams, the performance gap widens as the $k$ increases to 8 and 16. Despite that \eidx{} has lower filtering power than \qidx{}, it still achieves a good speedup of $2.6\times$ when indexing with 96 bigrams.

The \wltraffic{} workload has only 4 queries with short literal components. With only 18 bigrams extracted, each bigram is substantially important as even the most infrequent one appears in a quarter of all the queries as shown in~\Cref{table:bigram_stats}. \qidx{} continuously beats the baseline. Conversely, the top-6 most frequent bigrams in English are absent from the queries, causing \eidx{} to only surpass the baseline after indexing 8 or more bigrams. \qidx{} outperforms \eidx{} across all values of $k$. For \qidx{}, the runtime gradually decreases as the number of bigrams increases. \eidx{} also experiences a reduction in workload runtime, although meaningful speed improvements are only shown when we include one or more new bigrams that also exist in the queries of the workload.

{With the result, we show that \sys{} is resilient to unknown workload, taking advantage of the fact that logs and literal components in log analysis tasks are mostly human-readable English text.} 
This is most significant when the workload has a large query set with a large portion of text literals. The performance gain for using bigrams according to their frequency in the English corpus is less significant when the literal components in the query set have bigram distribution dissimilar to that of the English language. Even so, \sys{} can still achieve satisfactory performance gain by increasing the number of bigrams used for indexing.

\subsection{Index Granularity}\label{subsec:idx_granu}

\begin{figure}[!tp]
\begin{subfigure}{0.45\columnwidth}
  \centering
  \includegraphics[width=.85\columnwidth]{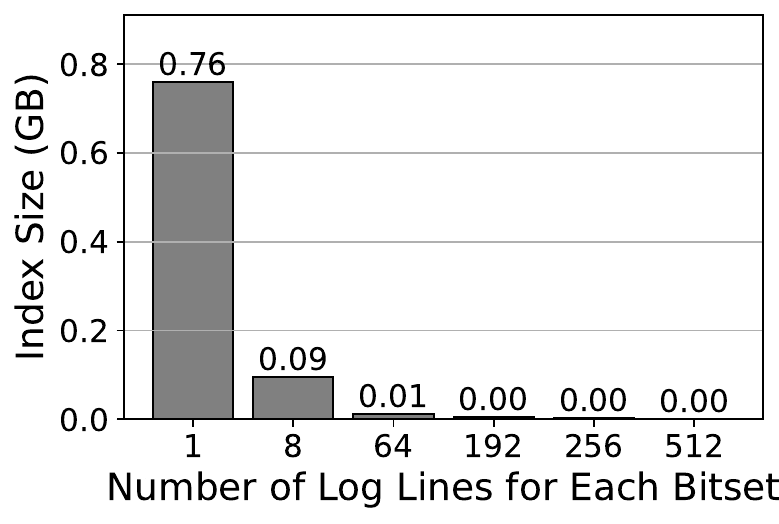}  
  \vspace{-0.5em}
  \caption{Index Size - 64}
  \label{fig:granu_64_index_size}
\end{subfigure}
\hfill
\vspace{0.5em}
\begin{subfigure}{0.45\columnwidth}
  \centering
  \includegraphics[width=.85\columnwidth]{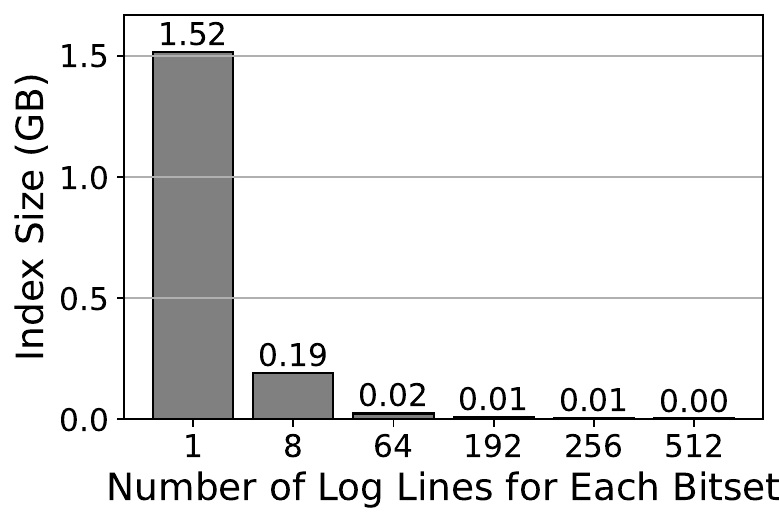}  
  \vspace{-0.5em}
  \caption{Index Size - 128}
  \label{fig:granu_128_index_size}
\end{subfigure}
\vspace{0.5em}
\begin{subfigure}{0.45\columnwidth}
  \centering
  \includegraphics[width=.85\columnwidth]{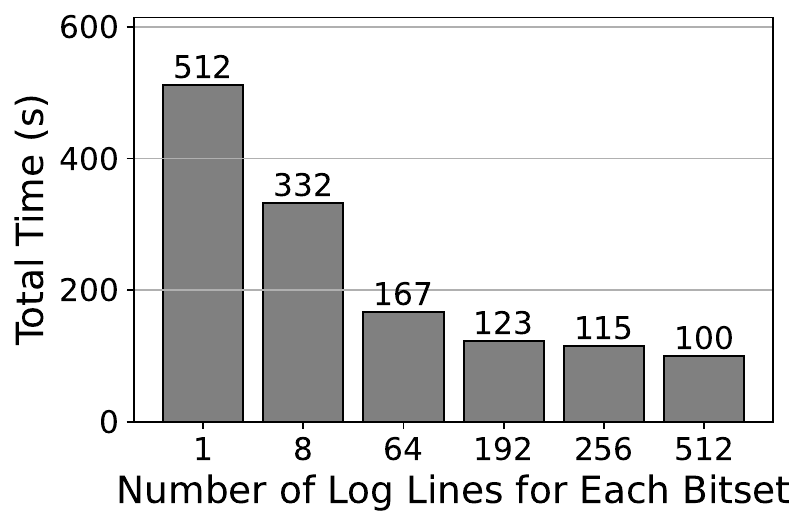}  
  \vspace{-0.5em}
  \caption{Index Build Time - 64}
  \label{fig:granu_64_build_time}
\end{subfigure}
\hfill
\vspace{0.5em}
\begin{subfigure}{0.45\columnwidth}
  \centering
  \includegraphics[width=.85\columnwidth]{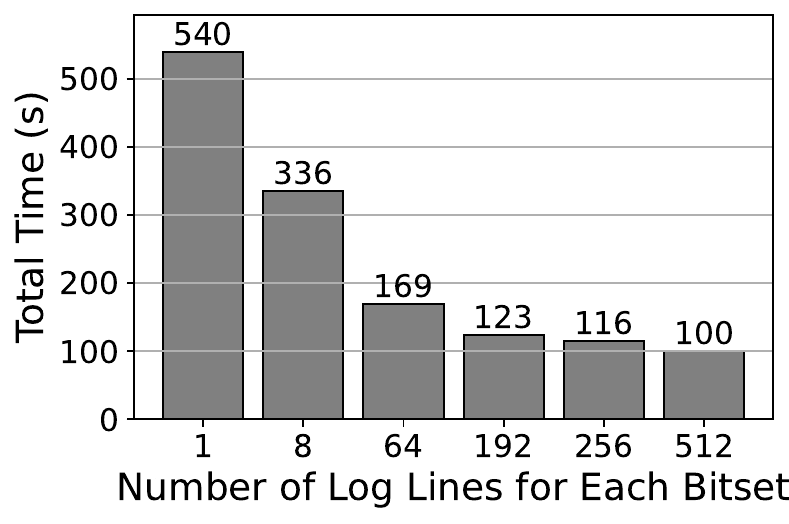}  
  \vspace{-0.5em}
  \caption{Index Build Time - 128}
  \label{fig:granu_128_build_time}
\end{subfigure}
\vspace{0.5em}
\begin{subfigure}{0.45\columnwidth}
  \centering
  \includegraphics[width=.85\columnwidth]{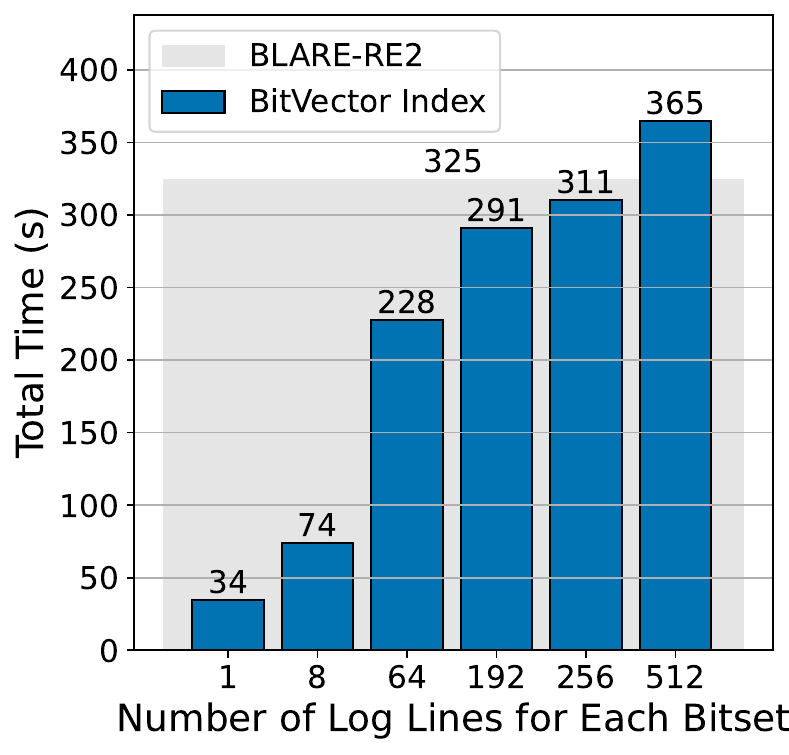}  
  \vspace{-0.5em}
  \caption{Matching Time - 64}
  \label{fig:granu_64_match}
\end{subfigure}
\hfill
\vspace{0.5em}
\begin{subfigure}{0.45\columnwidth}
  \centering
  \includegraphics[width=.85\columnwidth]{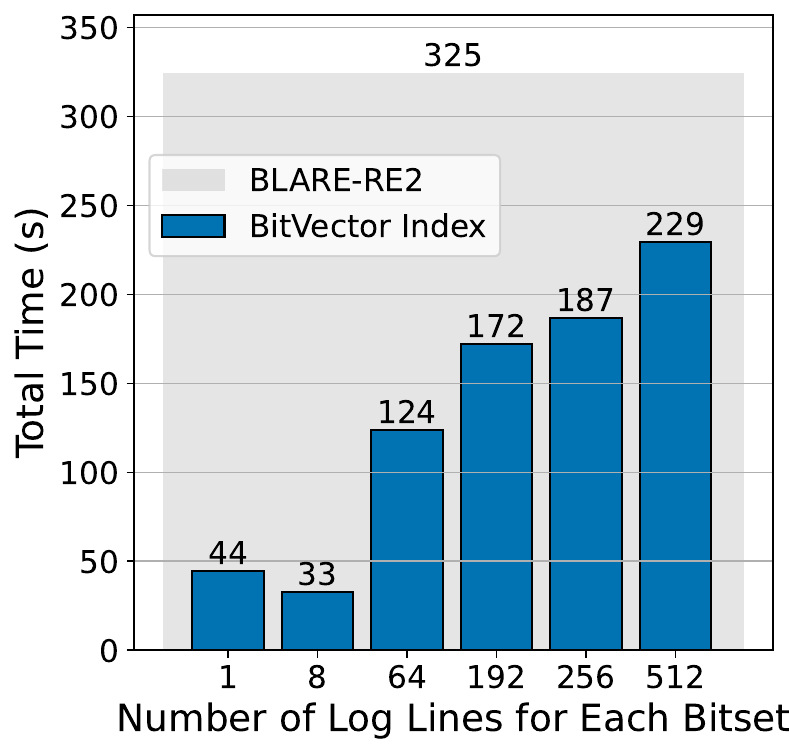}  
  \vspace{-0.5em}
  \caption{Matching Time - 128}
  \label{fig:granu_128_match}
\end{subfigure}
\vspace{-1em}
\caption{Comparing the impact of the bit-vector index under different granularity levels. Uses top-128 and the top-64
most frequent bigrams in workload queries.}
\label{fig:granu}
\end{figure}


Now we answer \ref{q4} by analysing the impact of index granularity on the index construction overhead and performance gain. In the experimental setup, we use index granularity of 8, 64, 192, 256, and 512 as number of log lines in a group. 
We maintain uniformity with the same set of bigrams for result reporting. Specifically, we plot results for the top-128 and top-64 bigrams derived from the queries of the \wlsqlfull{} workload, and run the full \wlsqlfull{} workload to compare the results under different levels of granularity. 

\subsubsection{Index Construction Overhead.}

\Cref{fig:granu_64_index_size} and \Cref{fig:granu_128_index_size} show as granularity levels become coarser, there is a drastic decrease in index size. In~\Cref{fig:granu_64_index_size} and \Cref{fig:granu_128_index_size}, we observe that construction time decreases as the number of log lines within each group increases. When log lines are amalgamated into larger groups, for bit-vector index, it has smaller number of bit-vectors. Additionally, fewer bigram existence checks are made on average for each log line as the number of log lines corresponding to one index entry increases. This correlation between granularity and index overhead suggests opportunity for a coarser granularity to achieve lower space overhead and higher performance improvement.

\subsubsection{Matching Time with Index.}
Looking at~\Cref{fig:granu_64_match}, when we index with the top-64 bigrams, we can see that as the number of log lines represented by one index entry increases, the total runtime also increases. However, for the case with top-128 bigrams in ~\Cref{fig:granu_128_match}, having a granularity of 8 actually achieves better performance than a granularity of 1. Although having a finer granularity of 1 records more information about the dataset compared to a granularity level of 8, the gain in indexing the top 64 to 128 bigrams cannot compensate for the increased time in index lookup. For the \wlsqlfull{} workload, the fastest running time is achieved with a granularity level of 8 and by indexing 128 bigrams.\eat{ \sd{192 or 128? 9f is for 128.}.}  Selecting index granularity involves balancing index overhead and matching time.
\eat{Finer granularity increases overhead, with more details stored and processed, but typically reduces matching time due to more selective filtering.}
Finer granularity, which results in a larger number of index entries due to the more detailed division of the dataset, increases overhead but typically reduces matching time due to more selective filtering.
Conversely, coarser granularity decreases overhead by aggregating information but may result in longer matching time due to a less selective filtering process, resulting in more log lines being considered during matching. The choice of granularity thus depends on optimizing computational efficiency within the constraints and requirements of the specific application.

\subsection{Other Indexing Methods}\label{subsec:other_idx}
\Ling{Other popular solutions to encode the existence of $n$-grams are signature files and inverted index.}
\eat{Another popular solution is to use the inverted index to record the existence of $n$-grams. }
In this section, we answer \ref{q5} with a comparative analysis of our bit-vector index\Ling{, signature files,} and the inverted index, \eat{both}\Ling{all} constructed using an identical set of bigrams. \Ling{We use the hashing schema, MurmurHash~\cite{murmurhash} for bigram signatures in BitFunnel~\cite{Goodwin_2017}. We used line level bit array sizes of 64 and 128 for both signature files and bivector index. For each size $k$, we build the bit-vector index with top $k$ bigrams, and signature files with top-$k$, upper half of the top-$k$, and lower half of the top-$k$ bigrams. We will explain the setting during evaluation.} 

Since an inverted index is often used to index documents of a size larger than the average size of a log line, we compare the impact of the indexing strategies on varying levels of index granularity. 
We group a fixed number of log lines together and assign a unique group id for indexing purposes. For example, at a granularity of 64, consecutive log lines were amalgamated into groups of 64, resulting in indices that are constructed based on whether specific bigrams are present in any of the log lines within each group. 
We build the inverted index using a hash map, with the bigrams being the key, and a set of log group IDs\eat{  \sd{vector or set? For inverted indexes, you would be doing an intersection which is more efficient with IDs being in a set?}} where the corresponding bigram exists being the value. {Each bit in the bit-vector index now represent the existence of the bigrams in the corresponding log group.}
We run the experiments using the same configurations of index granularity and number of bigrams as in~\Cref{subsec:idx_granu}.

\begin{figure}[!tp]
\begin{subfigure}{0.45\columnwidth}
  \centering
  \includegraphics[width=0.95\columnwidth]{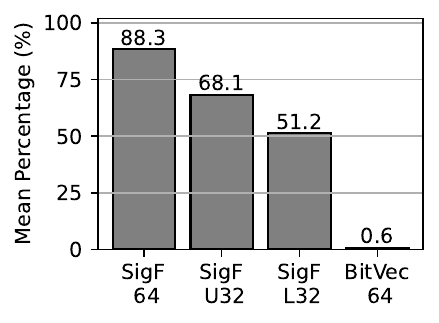}  
  \caption{Percentage After Filter}
  \label{fig:sig_filter}
\end{subfigure}
\hfill
\begin{subfigure}{0.45\columnwidth}
  \centering
  \includegraphics[width=0.95\columnwidth]{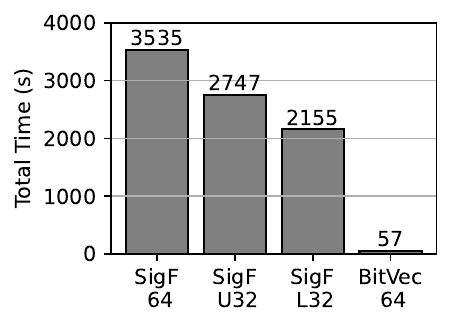}  
  \caption{Matching Time}
  \label{fig:sig_time}
\end{subfigure}
\caption{Comparing the performance of \sys\ index and signature files with the same number of bits per log line in the dataset. Uses top-64 most frequent bigrams in workload queries to construct the \sys\ (BitVec 64). Uses top-64 (SigF 64), top-32 (SigF U32), and 32th-to-64th (SigF L32) most frequent bigrams for signature files.\eat{ \sd{avoid color in the bars if it doesn't mean anything.}}}
\label{fig:sigfiles}
\end{figure}

\begin{figure*}[!htb]
    \centering
    \begin{minipage}{.62\textwidth}
        \begin{subfigure}{0.48\columnwidth}
          \centering
          \includegraphics[width=.95\columnwidth]{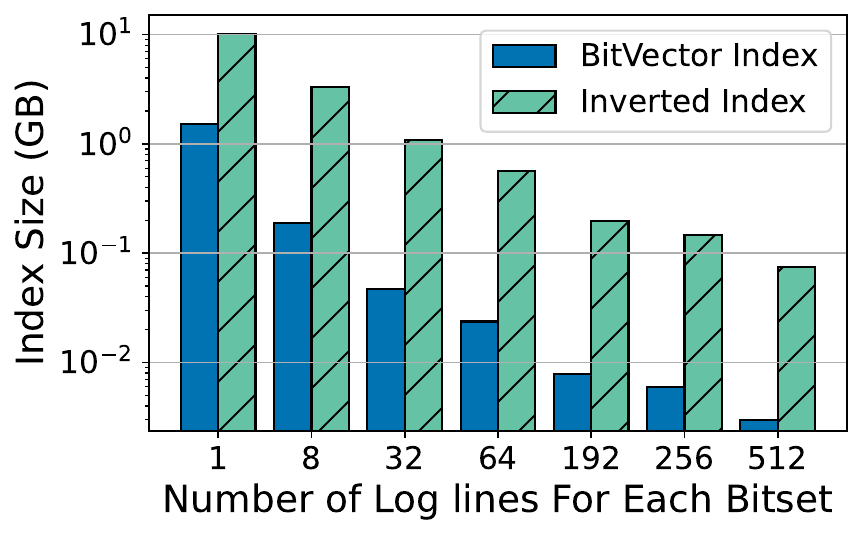}  
          \caption{Index Size}
          \label{fig:inverted_size}
        \end{subfigure}
        \begin{subfigure}{0.48\columnwidth}
          \centering
          \includegraphics[width=.95\columnwidth]{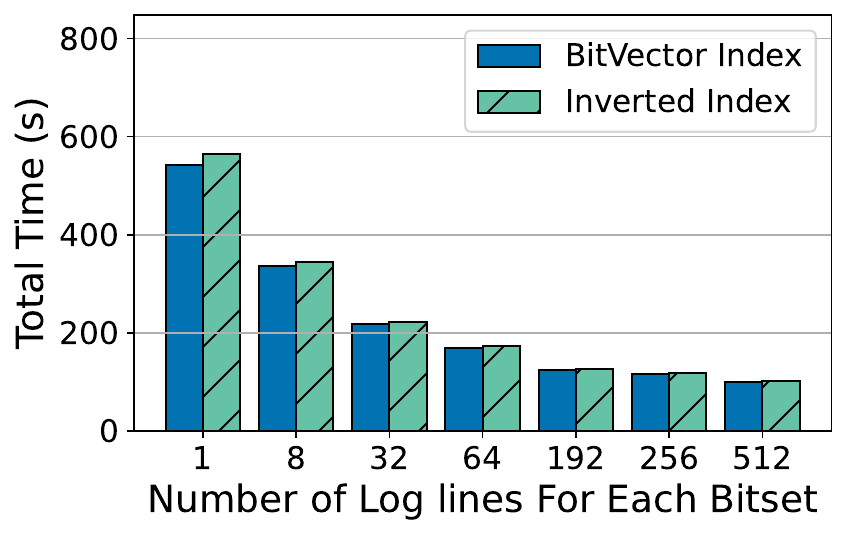}  
          \caption{Index Construction Time}
          \label{fig:inverted_build}
        \end{subfigure}
        \vspace{-.5em}
        \caption{Comparing the time and space overhead of index construction of \sys{} index and inverted index. Uses top-128
        most frequent bigrams in workload queries for both indices.}
        \label{fig:inverted}
    \end{minipage}%
    \hfill
    \begin{minipage}{0.35\textwidth}
         \centering
        \includegraphics[width=0.8\columnwidth]{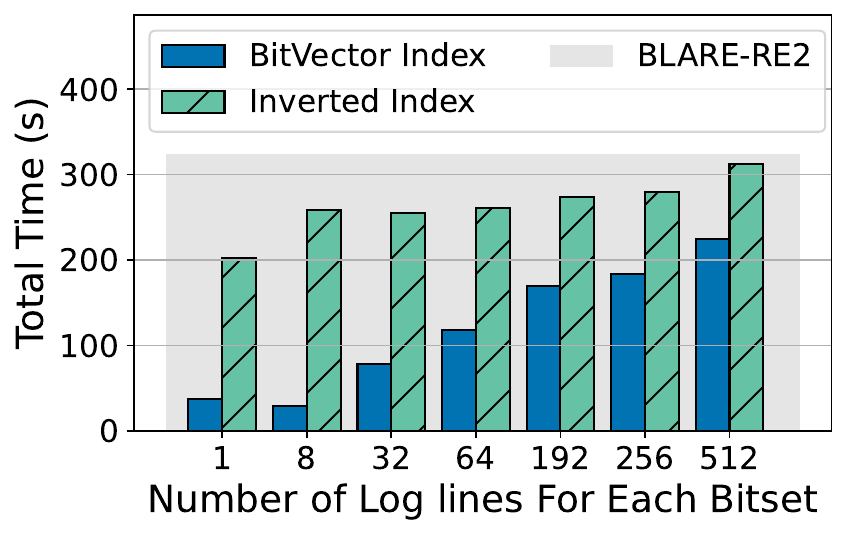}  
        \caption{ Comparing matching time of the \sys{} index and inverted index. Uses top-128
        most frequent $n$-grams in workload queries for both indices.}
        \label{fig:inverted_match}
    \end{minipage}
\end{figure*}

\subsubsection{Index Construction Overhead.}
\eat{
\begin{figure}[!tp]
\begin{subfigure}{0.5\columnwidth}
  \centering
  \includegraphics[width=.95\columnwidth]{figs/Index128_IndexSize_errbar__log.pdf}  
  \caption{Index Size}
  \label{fig:inverted_size}
\end{subfigure}
\hspace{-1em}
\begin{subfigure}{0.5\columnwidth}
  \centering
  \includegraphics[width=.95\columnwidth]{figs/Index128_IndexBuilding_errbar_.pdf}  
  \caption{Index Construction Time}
  \label{fig:inverted_build}
\end{subfigure}
\caption{Comparing the time and space overhead of the bit-vector index and inverted index. Uses top-128
most frequent $n$-grams in workload queries for both indices.}
\label{fig:inverted}
\end{figure}
}
\Ling{Since the index structures and construction steps of signature files and \sys\ index are very similar, where the only difference lies in how to encode the bigrams for each line, their index construction overheads are almost identical. However, this is not the case for inverted index.}
\Cref{fig:inverted_size} shows the detailed index size in log scale of both indices for different levels of granularity.
In our analysis, we consider only the minimum required size of the inverted index, counting only the space used to hold actual index data. This is important to note, as the space reserved for a common hash map data structure is usually much larger.
Upon observing the bar plot, we can see a notable difference in the trends of sizes between bit-vector index and the inverted index. 
For both index types, the index size decreases as granularity becomes coarser. The inverted index demands less space to denote the presence of specific bigram, and the bit-vector index has smaller number of index entries when the size of each entry stays constant.
Under identical conditions, the bit-vector index has a significantly smaller size compared to the inverted index. This difference in size highlights the space efficiency of the bit-vector index in maintaining a compact representation while managing the same set of bigrams.
\eat{In addition, as the granularity level becomes coarser, a proportional decrease in index size is noted for both bit-vector and inverted indexes. }
In~\Cref{fig:inverted_build}, we present the comparison of index construction time between the two types of index under varying granularity levels. Notably, the construction time of the two indices aligns closely under comparable conditions, as both indices are constructed by scanning through the dataset once, extracting all unique bigrams from each log line, looking up the existence of the same number of bigrams in hash-map type data structure, and then performing an update on the index.

\subsubsection{Matching Time with Index.}
\eat{
\begin{figure}[!tp]
 \centering
\includegraphics[width=0.6\columnwidth]{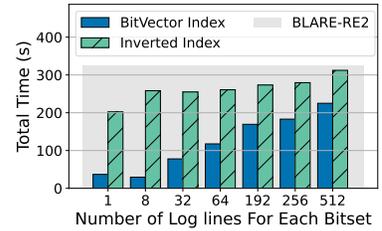}  
\caption{ Comparing matching time of the bit-vector index and inverted index. Uses top-128
most frequent $n$-grams in workload queries for both indices.}
\label{fig:inverted_match}
\end{figure}
}

Now we compared the workload runtime of the \sys, signature files, and inverted indices. 

\Ling{Looking at signature files first, we run an experiment with bit-vector size of 64 to index the top-64 bigrams for both signature files and \sys\ index. Signature for each log line filtered out only 11.7\% of the dataset, left with 88.3\% log lines going through the regex engine, whereas \sys\ index filtered out 99.4\% of the dataset, as shown in the leftmost and rightmost bars in~\Cref{fig:sig_filter}. Looking at the bit-vector signatures for each log line and the bitmask for each regex, we note that they are over saturated; bit-vectors are often all ones for log lines and all zeros for regexes. }

\Ling{With the same bit-vector size 64, we reduce the number of bigrams indexed to 32. By selecting the top-32 frequent bigrams, the percentage of log lines filtered increase to 31.9\%; by selecting the other half of the top-64 frequent bigrams, only about half (51.2\%) of the dataset need to be matched with regex engine after the index filtering. Both percentages of signature files method are still significantly higher\eat{ \sd{lower or higher? BitVec has 0.6 so everything else is higher, right?}} than \sys\ index, as shown in~\Cref{fig:sig_filter}. 
}
\eat{Looking at~\Cref{fig:sig_time}, the filtering difference translates to more significant query overhead of signature files compared to \sys\ index, resulting in a 12.5$\times$, 11.2$\times$, and 10.1$\times$ slower overall matching time for signature files using 64 bigrams, top-32 bigrams, and 32-64th bigrams respectively compared to \sys\ index of the same size.}
\Ling{Looking at~\Cref{fig:sig_time}, the filtering difference translates to more higher querying overhead for signature files compared to \sys\ index, resulting in a 62$\times$, 48$\times$, and 38$\times$ slower overall matching time for signature files using 64 bigrams, top-32 bigrams, and 32-64th bigrams respectively compared to \sys\ index of the same size. In fact, the matching time for signature files with 64 bigrams is very close to the total workload time of using RE2 directly without any index as the signature of each log line is usually too dense.}
\eat{With the knowledge of all bigrams indexed, we can represent its existence with a minimum of one bit as in the \sys\ index. False positives comes only for log lines that has all the required bigrams indexed but does not produce an exact match. Meanwhile for signature, using 64 bits to represent each bigram and OR together multiple 64 bits together to an overall 64-bit signature, we are guaranteed to lose information along the way, and the false positive of signature files indices additionally comes from failing to identify log lines that does not have all the required bigrams by the index.}

\Ling{Shifting our focus to inverted index, }in~\Cref{fig:inverted_match} 
there is a general trend that the \sys\ index consistently outperforms the inverted index in terms of matching time, especially at finer granularities. This performance disparity becomes less pronounced as granularity become coarser.
Also, as we transition to coarser granularities, the inverted index demonstrates a more gradual increase in the runtime compared to the \sys\ index, which experiences a sharper increase. The cost of index look-up to find all candidate log lines is higher for the inverted index than for the bit-vector index. When the granularity gets coarser, the number of lines remaining for full regex matching increases, and the cost of candidate search is amortized. Therefore, we can see from~\Cref{fig:inverted_match} that as the granularity becomes coarser, the performance gap between the bit-vector index and the inverted index becomes smaller. This aligns with conclusions of prior works that inverted index is suitable for indexing documents much larger in sizes than the log lines and the log groups in our experiments. 
\eat{Similarly, when we index with only 4 or 8 bigrams, the performance of the bit-vector index and inverted index are almost identical for most of the 5 granularity levels.}

\subsection{Case Study: Impact of Query Set Size} \label{subsec:expr:db_full}
{
To analyze the scalability of \sys{} as the query workload size increases, we conducted an additional experiment comparing the sampled \wlsqlfull{} workload (132 queries) with the complete \wlsqlfull{} workload (8,941 queries). As the query set expanded, we did not see a significant change in index construction time. This is because the query set size impacts only the bigram selection process, which is very lightweight: around 0.05 seconds in the entire $\sim 500$ seconds of index construction time. For query performance, the sampled workload runs 35.7s  using \sys{} indexing with 64 bigrams, achieving a runtime speedup of more than $9\times$, whereas the larger workload runs 49.9s with the same configuration, achieving a significant speedup of $379\times$ over the baseline time of 18928s.
These results confirm that \sys{} efficiently scales to larger query sets.
}

\section{Related Work} \label{sec:related}

Due to the complexity of regex patterns and the variety of datasets, efficient regular expression matching and indexing are vital in several areas such as bioinformatics~\cite{Hammel_2002, Arslan_2006, Mulder_2006, Gouret_2009, Prieto_2014}, event stream processing~\cite{Agrawal_2008, Cohen_2008, Halle_2014, SUGIURA_2020},
network intrusion detection system~\cite{Baker_2006, Roan_2006, Kumar_2012, Jamshed_2012, Vasiliadis_2011,Tripp_2007, Hong_Jip_Jung_2006, Ming_Gao_2006, Meiners_2010, Liu_2004, Lee_2015}, text processing~\cite{Tabuchi_2003,Bui_2014} and data mining~\cite{Garofalakis_1999, Garofalakis_2002, de_Amo_2007, Trasarti_2008}. 

\introparagraph{Traditional regex matching algorithms}
Without the aid of indexing, 
traditional pattern-matching approaches, such as
\eat{deterministic finite automata (DFA) and nondeterministic finite automata (NFA)}
{DFA and NFA}, have established the foundations. Adapting ideas from KMP~\cite{KMP}, Aho-Corasick algorithms, and Boyer-Moore~\cite{Boyer_1977}, there has been work optimizing automata matching~\cite{Horspool_1980, Sunday_1990, Crochemore_1994, Trivedi_2020}. DFA, despite its reputation for rapid matching, frequently encounters state explosion issues, {
and works had been done optimizing its space usage~\cite{Wang_2014, Yu_2006, Yang_Prasanna_2011, Kumar_2006, Becchi_2007, Becchi_2007_2, Ficara_2011, Huang_2013, Brodie_2006, Becchi_2013, Kong_2008, Huo_2015, Hua_2009, Yang_2016, Najam_2014, Qiu_2022, Kumar_2012}.
NFA provides a more space-efficient option, at the cost of computational blow-up during matching, and there are works on optimizing its performance~\cite{Yang_2011, Myers_1999, Navarro_2000, hyyro_2002, Roan_2006, Kaneta_2010}.
Automaton variations that employed one or more of the above techniques have been proposed, including suffix automaton~\cite{Crochemore_1999, Navarro_2000}, XFA~\cite{Smith_2008}, D$^2$FA~\cite{Kumar_2006}, and $\delta^N$FA~\cite{Ficara_2011}.
Recent work has also investigated how to obtain speed up those methods using modern hardwares~\cite{Lee_2015, Tripp_2007, Baker_2004, Hong_Jip_Jung_2006, Brodie_2006, Yang_2008, Sidhu_2001, Jamshed_2012, Vasiliadis_2011, Huang_2013, Peng_2011, Meiners_2010, Ming_Gao_2006, Wang_2023, wang2019hyperscan, Scarpazza_2009}
Those solutions targeted general regex matching task without considering the characteristics of a specific workload. }
\eat{
\introparagraph{Indexing the regexes}
{In tasks like intrusion detection, where each pattern is matched only once by a large number of rules that can go as high as 10k, it is useful to build indexes on the patterns than on the data. }
There are works touching upon multi-pattern matching, which is common in intrusion detection tasks. {In such scenario, each incoming packet is matched against a rule set of size more than 10K in its arrival. Most of the packets can be filtered by string matching in the rules~\cite{pigasus}. The query set of the workload is also unchanged or infrequently changed. Due to those characteristics, existing works tend to build index-like structures on the rules rather than on the incoming data.} One approach to tackling this problem is to selectively group regexes together to form a composite DFA~\cite{Yu_2006}. Another popular method involves building tree-like data structures for sub-strings, regular expressions or DFAs~\cite{Chan_2008, Chan_2002, Lee_2015, Chan_2003}. Intuitively, some other works also look into preprocessing and indexing the dataset to achieve higher query efficiency. 
}

\introparagraph{Theoretical analysis of index building}
When we can preprocess the dataset, building an index for faster regex queries becomes an option.
\cite{Gibney_2021} conducted a theoretical analysis to understand hardness constraints and trade-offs of the index built for regex matching. Different data preparation methods and indexing strategies are proposed considering the trade-offs.
\eat{For example, ~\cite{Gibney_2021} presented a method based on Boolean matrix multiplication that can provide fast regex matching while taking exponential space in the dataset length. }
Some works focus on a specific type of automata (e.g. Wheeler automata), analyzing the performance bounds~\cite{Cotumaccio_2021}
\eat{, and building FM-index using the idea of co-lexicographical ordering and prefix-sorting techniques~\cite{Cotumaccio_2021, cotumaccio2023colexicographically, Alanko_2020}}. 

\introparagraph{Bit-Vector Indexes} {In the related realm of information retrieval, research has been done on signature file, generating bit signatures for documents~\cite{Faloutsos_1984}, and multikey attributes~\cite{Sacks_Davis_1987} for fast lookup. The signatures are generated by hashing each word into bit-vector of the same size, and then superimposing the bit-vector~\cite{Faloutsos_1984}. Though with various attempts of optimization~\cite{Sacks_Davis_1987, Ishikawa_1993, Lin_1992, Kent_1990}, evaluation~\cite{Zobel_1998} demonstrated that there is a trade-off wrt. inverted indexes, as it requires multiple hash computation for each word and a signature size as large as $10K$ bits. The difference in bit-vector generation step makes \sys{} inherently free of the above disadvantages.
}
{Recent work on Multi-Dimensional Data Layouts (MDDL)~\cite{Ding_2024} has also explored bitmap-based indexing for prefiltering in database systems. The key idea in MDDL is to maintain a bitmap per row that encodes whether the row satisfies a fixed set of query predicates. While MDDL orders predicates based on estimated selectivity to minimize overall filtering cost, our approach orders $n$-grams by their frequency in the query workload to ensure broader applicability across diverse regex patterns. Additionally, MDDL supports dynamic updates and maintenance of bitmap indexes as workloads evolve, which our current design does not support—an exciting direction for future work. However, unlike MDDL, which benefits from sorted bitmap indexes, applying similar sorting strategies to log data for regex evaluation is significantly more expensive due to the sheer data volume. Moreover, such sorting could degrade regex evaluation performance, especially when logs are already sorted by another dimension (e.g., time). Consequently, data-partitioning-based techniques require deeper investigation, presenting an exciting future work direction.}{Feed-forward Bloom filter is also used in \cite{Moraru_2012, Cha_2011} as a lightweight indexing structure on the patterns for multi-pattern matching together with hardware optimization techniques.}

\introparagraph{Inverted indexes}
A very large collection of works uses inverted index or similar data structures to index multigrams of the dataset. {Inverted index with positional information works well for database when each entry is large.} The inverted index uses $n$-gram as the search key and returns a list of positional identifiers~\cite{Williams_2002, Navarro_2018, FREE, Kim_2010, tinylex, SHIN_2018}. 
\eat{Matching on a regex query involves taking all the $n$-gram entries that appear in the query, and then doing set operations on the values. }
There are works that aim to reduce the space consumption~\cite{tinylex} or build the index under space constraint~\cite{BEST}.

\introparagraph{Tree indexes}
General tries and suffix trees are also commonly used for indexing data for pattern matching~\cite{Baeza_Yates_1996, Burkhardt_1999, Giladi_2002, Meek_2003,Cao_2005, Kandhan_2010}. {However, they have a heavy space requirement. Without compression, these types of indexes on a natural language dataset can easily reach a size more than 10 times of the original dataset~\cite{Ruano_2021}.} A two-level index is also employed using a hash table and a trie on encoded signature of DNA data~\cite{Cao_2005}. {This method is popular in genome database where the alphabet size is small. Suffix trees are also commonly used for multi-pattern matching where indexes are constructed on the queries.} 
\eat{SigMatch~\cite{Kandhan_2010} uses fixed height trie together with bloom filters to achieve high performance with low space requirements.}

\introparagraph{$N$-grams for indexing and filtering}
Besides regex matching, the $n$-gram indexing or pre-filtering is broadly used in string matching for predicate matching and joins~\cite{qgram_dbms}.
\eat{Solutions usually start with $n$-grams candidate extracted by sliding windows, but some work with non-overlapping $n$-grams for approximate match results~\cite{qsample}.}
Trigrams are commonly used in existing indexing solutions~\cite{cox2012regular, trigram-search, postgres_trg} {where all trigrams are indexed, usually using inverted index.} Works has been done comparing the impact of $n$ of $n$-gram for inverted index~\cite{tinylex, Kim_2010, cox2012regular}. Some~\cite{Jakubicek_2014, tinylex} build an index for all $n$-grams of $n \leq k$. Solutions in~\cite{Navarro_2009, FREE, Qiu_2022, BEST} instead use \eat{unfixed }variable length multigrams. Github Code Search~\cite{github} assigns weights to each bigram, and selectively indexes on multigrams that are composed of consecutive bigrams {according to their weights}
\eat{with weights of inner bigrams strictly below weights of bigrams at the two ends}. \eat{Besides building inverted on $n$-grams of some fixed $n$, ~\cite{Qiu_2022, BEST} also generates a tree-based query plan considering the selectivity of each $n$-grams.}

\eat{
Considering the space constraint for the index, in many cases it is unrealistic to store all possible grams. There is a rich literature on the strategy of selecting the optimal subset for indexing {when facing with a fixed set of queries and fixed dataset.} The search for the optimal subset is shown to be NP-hard~\cite{BEST}; most works focus on techniques to reduce the search space and reach a good enough subset.
\textsf{BEST}~\cite{BEST} uses bit-matrices and suffix array preprocessing to approximate the best covering subset of multigrams for indexing under space constraints. \textsf{FREE}~\cite{FREE} also defined the usefulness of $n$-grams and selected the minimal prefix-free set of multigrams surpassing the usefulness threshold previously. \textsf{FAST}~\cite{FAST} combined strategies from \textsf{BEST} and \textsf{FREE} and reshaped the selection problem to a combinatorial optimization framework. Considering the characteristics of the real-world dataset, SigMatch~\cite{Kandhan_2010} takes a greedy approach and adds grams with the highest frequency in incoming packets.
Though indexing all $n$-grams,~\cite{Kim_2010} uses estimate I/O cost, selectivity estimation, and intersection size estimation techniques to choose an optimal set of $n$-grams for each query. 
\TODO{Some works are for fixed-pattern matching I would say. If need space we should remove them or mention and condense them.}
}

\eat{\introparagraph{Data exploration in log analysis frameworks}
Beyond methods and data structure that accelerate regular expression matching in log analysis, some log analysis frameworks move a step further and use clustering methods to automatically detect patterns~\cite{Debnath_2018,Tang_2011} and events~\cite{Makanju_2009, Vaarandi, Fu_2009} within log data.
\eat{There are other works that bypass the structured information extraction phase and more specifically target one task like anomaly/failure detection~\cite{He_2016, Bodik_2010, Chen_2004} using statistical or ML techniques.
General data analysis frameworks are also used to extract insightful information and expose hidden patterns for the exploration phase of log analysis tasks. }
Text search engines are used for general queries among log data~\cite{lucene, Wei_2017}. Data exploration frameworks like Lux~\cite{Lee_2021} and DataPrep.EDA~\cite{Peng_2021} can automatically generate statistical modeling and visualization recommendations, providing insights to data scientists for deeper analysis. }

\section{Conclusion and Future Work} \label{sec:conclusion_future}
We have revisited the challenging problem of regular expression indexing in this paper. We conducted a systematic analysis of traditional methods like inverted indexes with $n$-grams. We examined how well they performed when combined with modern hardware and state-of-the-art regex evaluation engines.
We introduced a lightweight indexing framework, \sys{}, that is configurable. This allows for a balance between the cost of building indexes and the improvement in query performance. Our technique stands out from existing approaches in two key aspects. First, we employ a bit filtering-based index for a carefully chosen subset of $n$-grams. Second, we store the bit index directly alongside each log line in our input, rather than in a separate data structure. This integration not only simplifies the indexing process with a much lower cost than existing methods but also results in a more streamlined and efficient framework. {It is also important to emphasize that \sys{} remains independent of the regex engine, which guarantees flexibility and wide applicability on different platforms and use cases. }

We envision our lightweight indexing framework developing further in the future to leverage distributed computing and parallel processing to manage ever-increasing volumes of data with ease. To further increase the adaptability and robustness of our framework, another area worth noting is the development of a dynamic index updating strategy. The strategy is expected to include the ability to add, remove, or update bigrams in the index. {Future extensions could leverage incremental statistics from query execution to adaptively refine the index, enabling dynamic reconfiguration for settings where the data and queries are continuously changing. This setting is an important consideration to remedy the limitation of our work's assumption that the distribution of literals is close to that of the English language. The dynamic setting is also common in streaming and security-focused scenarios such as intrusion detection systems (for example, Snort~\cite{roesch1999snort} and Suricata~\cite{suricata2009home}).}
{We also plan to explore bitmap compression techniques like RLE, Roaring Bitmaps, and WAH, which can reduce storage overhead without affecting precision. These methods are especially promising for integration with columnar storage formats, further improving space efficiency.} Furthermore, a thorough investigation of storage strategies on efficient data formats like Parquet can enhance index storage and retrieval procedures. 
\eat{{We also plan to develop heuristics for auto-configuring parameters such as $k$ (number of bigrams) and index granularity based on user-defined size constraints, using techniques involving bigram selectivity analysis. }}

\begin{acks}
This research was supported in part by a grant from the Microsoft Jim Gray Systems Lab (GSL) and by the National Science Foundation (NSF) under grant CCF-2407690.
\end{acks}
\clearpage

\bibliographystyle{ACM-Reference-Format}
\bibliography{references}


\appendix

\onecolumn

\section{Correctness Proof} \label{app:correct}

\begin{lemma}
    Algorithm~\ref{algo:index_query} correctly generates the output of a regex for the given log $L$.
\end{lemma}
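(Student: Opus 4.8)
The plan is to show that the set of log lines reported by \Cref{algo:index_query} equals $M = \{\,\ell \in L \mid \ell \in \mathcal{L}(r)\,\}$, the set of lines actually matched by $r$. This splits into two directions. \emph{Soundness} (output $\subseteq M$) is immediate: a line $\ell_j$ is reported only after \textsc{Regex-Match}$(\ell_j,r)$ is invoked on line~\ref{line:regex}, so assuming the backend engine (RE2) is correct, only genuinely matching lines are returned. The substantive direction is \emph{completeness} ($M \subseteq$ output), i.e. the negative index property: no line of $M$ is ever discarded by the bit-vector test on line~\ref{line:filter}.

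For completeness I would first make the filter condition precise. Let $g_1,\dots,g_k$ be the indexed bigrams with $G[g_i]=i$. Tracing \textsc{Get-Bitmask}$(r,2,G)$: it accumulates into $S$ all bigrams of the literal components of $r$, sets the mask bits at positions $G[g]$ for those $g$ that are indexed (\textsc{Set-BitVector}), and then flips on line~\ref{line:flip}; hence the returned mask $m$ satisfies $m[i]=0$ iff $g_i$ occurs in some literal component of $r$, and $m[i]=1$ otherwise. From \Cref{algo:index_construct}, $I[j][i]=1$ iff $g_i$ occurs in $\ell_j$. Therefore $I[j]\lor m$ is the all-ones vector iff, for every indexed bigram $g_i$ occurring in a literal component of $r$ (call these the \emph{required} bigrams), $g_i$ also occurs in $\ell_j$.

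The core sublemma is then: if $\ell \in \mathcal{L}(r)$, every literal component $t$ of $r$ occurs as a contiguous substring of $\ell$, and consequently every bigram generated from $t$ occurs in $\ell$. I would establish the substring claim by structural induction on $r$ along the grammar $R : \emptyset \mid \varepsilon \mid c \mid (R\mid R)\mid (R\cdot R)\mid (R*)$, using the specification of \textsc{Extract-Literals-In-Regex} that a literal component is a maximal run of literal characters $c\in\Sigma$ appearing in $r$: in the base cases and under union and star, a literal component of the subexpression that contributes to a match is matched verbatim; under concatenation, a literal component of $R_1\cdot R_2$ either lies wholly inside one operand or is the concatenation of a trailing literal run of $R_1$ with a leading literal run of $R_2$, and in all cases it appears verbatim in any word of $\mathcal{L}(R_1\cdot R_2)$. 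Once $t$ is a substring of $\ell$, every length-$2$ window of $t$ is a length-$2$ window of $\ell$, so each bigram of $t$ — in particular each required bigram — occurs in $\ell$.

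Combining the pieces: for $\ell_j \in M$, every required bigram occurs in $\ell_j$, so $I[j]$ has a $1$ at every position where $m$ has a $0$, whence $I[j]\lor m$ is all-ones, $\ell_j$ passes line~\ref{line:filter}, and \textsc{Regex-Match} reports it; thus $M\subseteq$ output, and with soundness the two sets coincide. I expect the main obstacle to be the structural-induction sublemma: pinning down "a literal component occurs contiguously in every matched word" rigorously requires a careful treatment of \textsc{Extract-Literals-In-Regex}, and the concatenation case — where runs of literal characters can merge across the operator boundary — is the delicate one. The remaining ingredients are routine: the bit-level argument about $I[j]\lor m$, and the observation that false positives (lines that pass the filter but miss a non-indexed required bigram, or contain all required bigrams without matching $r$) are harmless because \textsc{Regex-Match} is still run on every surviving line. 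I would also dispatch the boundary case of literal components shorter than $2$, which generate no bigrams and hence impose no filtering constraint, leaving correctness intact.
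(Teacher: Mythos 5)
Your bit-level core is the same argument the paper gives: trace \textsc{Get-Bitmask} to see that $m[i]=0$ exactly for indexed bigrams occurring in the extracted literals, observe that $I[j]\lor m$ is all-ones iff every such bigram occurs in $\ell_j$, handle the empty-gram case via the all-ones mask, and note that soundness is free because \textsc{Regex-Match} is still run on every surviving line. The paper's proof stops there; you go further and try to \emph{prove} the one fact the paper silently assumes, namely that any line matching $r$ contains every extracted literal as a substring.

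That extra sublemma is where your plan breaks, and not in the case you flag as delicate. Under the specification you adopt for \textsc{Extract-Literals-In-Regex} --- ``a literal component is a maximal run of literal characters appearing in $r$'' --- the claim is simply false: for $r=\texttt{abc}\mid\texttt{def}$ a line matching via \texttt{abc} need not contain \texttt{def}, and for $(\texttt{abc})*\texttt{x}$ or $(\texttt{abc})?\texttt{x}$ a matching line need not contain \texttt{abc} at all; yet \textsc{Get-Bitmask} folds the bigrams of \emph{all} components of $T$ into one mask, so under that specification the algorithm itself would produce false negatives and no induction can rescue the statement. Your induction sketch handles union and star by saying the component ``that contributes to the match is matched verbatim,'' but completeness needs every component in $T$, including those from branches that do not contribute. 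The repair is to change the specification, not the induction: the argument must be predicated on \textsc{Extract-Literals-In-Regex} returning only literals guaranteed to occur in every string of $\mathcal{L}(r)$ (literals not governed by $\mid$, $*$, $?$ --- the property the paper implicitly attributes to its literal extraction, in the spirit of \blare{}). With that specification the sublemma is immediate --- a required literal is a substring of any matching line, hence so is each of its bigrams --- and no structural induction over the grammar (and no delicate concatenation case) is needed; the rest of your argument, including the sub-length-$2$ boundary case, then goes through and matches the paper's reasoning.
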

\begin{proof}
    Consider some log line $\ell \in L$. First, note that the bit index for $\ell$ stores a set bit for all bigrams $b \in G$ if $b$ is a substring of length $2$ in $\ell$. To show the correctness of Algorithm~\ref{algo:index_query}, it suffices to argue that the bit filtering on line~\ref{line:filter} does not remove any log line from consideration that may contain a potential regex match. We consider two cases:

    \noindent \introparagraph{Regex has no $n$-grams} In this case, the \textsf{Set-BitVector} on line~\ref{line:setbit} returns a bit-vector with all bits as $0$ and line~\ref{line:flip} will set $m$ to bit-vector with all bits as $1$. Thus, the check on line~\ref{line:filter} will be evaluated to true and the regex will be evaluated on the log line under consideration.

    \noindent \introparagraph{Regex has $n$-grams} Suppose $g_1, \dots, g_j$ are the $j$ $n$-grams present in the regex and also as keys in the {inverted index\eat{hashmap}} $G$.  Let us consider gram $g_1$. \textsf{Set-BitVector} will set the bit for $g_1$ as $1$ and line~\ref{line:flip} will set it to $0$. Next, for the check on line~\ref{line:filter}, the output bit corresponding to $g_1$ will be $1$ if and only if the bit in $I[j]$ for $g_1$ is set to $1$ (i.e. $g_1$ is a $n$-gram that is present in the log line $L_j$). In other words, if $I[j]$ for $g_1$ is $0$, we correctly do not process the log line. Otherwise, the output bit is correctly set to $1$. The same reasoning applies to all other $n$-grams as well. Thus, the check on line~\ref{line:filter} is passed if and only if each of the $n$-grams of the regex are also present in the index $I[j]$.
\end{proof}

\section{Advantages of Bit-Vector Index for other settings}

With a variety of storage and operational benefits, the bit-vector index appears as a particularly effective tool in the context of streaming log management systems. Local indexing of logs during ingestion eliminates the need for rigorous concurrent index updates, speeding up the ingestion process. When queries are run on a portion of the log collection, the local indexing, especially at a line granularity, is found to be helpful because it effectively reduces querying overhead by proactively eliminating out-of-range logs from the index. Furthermore, in the context of a time-series database of logs, where entries adhere to a Time-to-Live (TTL) rule and older logs are cleaned up regularly, the bit-vector index ensures that the purging of outdated log lines is executed with minimal additional effort. This local, granular indexing strategy also exhibits its advantages in distributed storage settings for log data storage spanning multiple instances, nodes, or regions. Local indices reduce the cost and latency issues of retrieving remote indexes. Although it is possible to build an inverted index for each local instance, the bit-vector index stands out for its flexibility, requiring no specialized logic or additional considerations during its construction and usage. The bit-vector index is flexible during data partitioning or re-partitioning. By simply slicing the index along with the data, as opposed to other indexing methods, it avoids the significant overhead of rebuilding smaller local indices, as seen with inverted index partitions.

\begin{figure}[!tp]
\begin{subfigure}{0.45\textwidth}
  \centering
  \includegraphics[width=0.9\columnwidth]{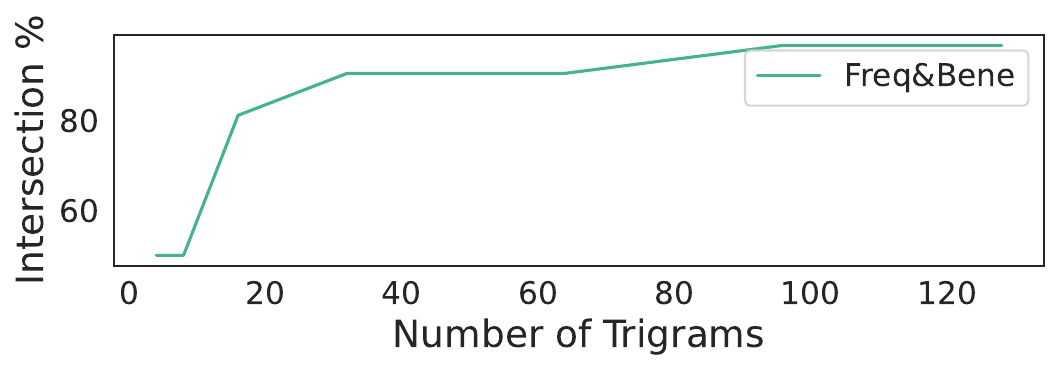}  
  \vspace{-0.5em}
  \caption{Trigram intersection as a percentage from strategy pairs.}
  \label{fig:microbench_intersect_tri}
\end{subfigure}
\begin{subfigure}{0.45\textwidth}
  \centering
  \includegraphics[width=0.9\columnwidth]{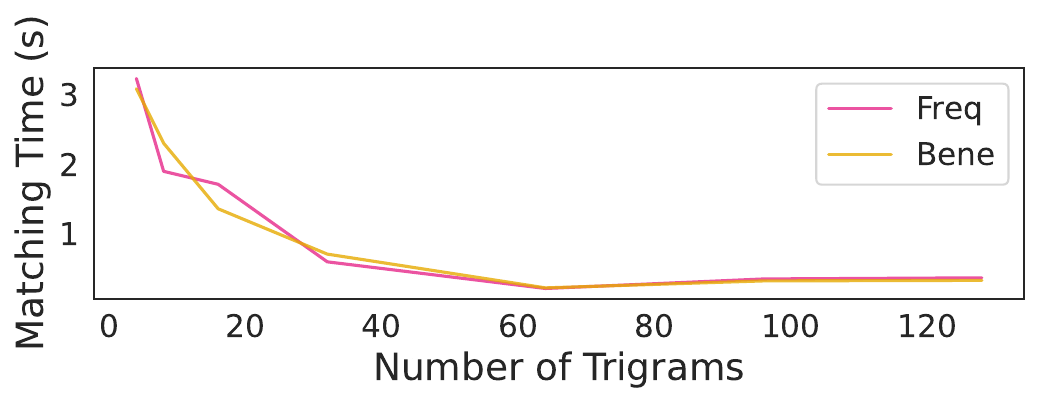} 
  \vspace{-0.5em}
  \caption{Regex Matching Time.}
  \label{fig:microbench_time_trigram}
\end{subfigure}
\caption{Varying the number of trigrams, compare the set of trigrams selected by the Frequency and Benefit methods and the matching time applying their resulting indices. The experimental setting is the same as described in Section~\ref{subsec:gram_select}}
\end{figure}

\begin{table}[t]
\captionof{table}{For a fixed $k$ and selection method, the percentage of top-$k$ bigrams that are substring of at least one trigram in the top-$k$ trigrams.}
\vspace{-1em}
\centering
\begin{tabular}{ c| | r | r | r | r | r | r | r }\hline
method$(\downarrow) ; k (\rightarrow)$ & 4 & 8 & 16 & 32 & 64 & 96 & 128  \\ \hline
 Freq & 0.25 & 0.62 & 0.75 & 0.78 & 0.83 & 0.79 & 0.77  \\ 
Bene & 0.50 & 0.38 & 0.56 & 0.81 & 0.86 & 0.79 & 0.74 \\ \hline
\end{tabular}
\label{tab:bigram_in_trigram}
\end{table}

\section{Discussion on Bigrams vs. Trigrams}\label{app:bigram_trigram}

In this section, we present more details on why bigrams are found to be as effective compared to using trigrams which are the multigram of choice for indexing related applications. For a given a multigram $g$, we define the selectivity of $g$ as the fraction of log lines in log $L$ that \emph{do not} contain $g$ as a substring. We begin by stating a simple fact.

\begin{observation} \label{prop:one}
    Suppose a bigram $g$ has selectivity $\sigma$. Then, any string literal $s$ containing $g$ has selectivity at most $\sigma$.
\end{observation}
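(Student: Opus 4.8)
The plan is to prove the bound by passing from the grams to the sets of log lines they match, and then reading off the selectivity inequality by monotonicity. For a gram $x$, I would write $M_x = \{\ell \in L : x \text{ occurs as a substring of } \ell\}$ for its set of matched log lines, so that the selectivity of $x$ is $|M_x|/|L|$, the fraction of the log in which $x$ occurs; in particular $\sigma = |M_g|/|L|$. Reducing the statement to a comparison of the two sets $M_s$ and $M_g$ is the key move, since once these sets are nested the selectivity bound is immediate.

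The one structural fact I would establish first is the inclusion $M_s \subseteq M_g$. This follows from transitivity of the substring relation: the hypothesis is that the bigram $g$ occurs as a substring of the literal $s$, so whenever a log line $\ell$ contains $s$ as a substring, $g$ — being itself a substring of $s$ — also occurs as a substring of $\ell$. Hence every $\ell \in M_s$ lies in $M_g$. I would note in passing that $s$ is a well-defined gram because it contains the bigram $g$ and therefore has length at least $2$, so the $n$-gram generation of \Cref{sec:background} applies and the substring relation is non-degenerate.

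With the inclusion in hand, the conclusion is a one-line count: $M_s \subseteq M_g$ gives $|M_s| \le |M_g|$, and dividing by $|L|$ (the case $L = \emptyset$ being vacuous) yields $\mathrm{selectivity}(s) = |M_s|/|L| \le |M_g|/|L| = \sigma$, which is exactly the claimed bound. I expect the only genuine subtlety to be tracking the \emph{direction} of the containment: because $g$ sits inside $s$ rather than the reverse, it is the matched set of the \emph{longer} literal $s$ that is the smaller one, $M_s \subseteq M_g$, so the shorter gram $g$ dominates and we obtain ``at most $\sigma$'' (and not the reverse). Fixing this orientation correctly is the entire content of the argument; everything else is routine bookkeeping.
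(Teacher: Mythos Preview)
Your proof is correct and uses the same monotonicity argument as the paper: both observe that any log line containing $s$ must contain $g$, so the matched set for $s$ sits inside that for $g$, and the selectivity bound follows immediately. One caveat worth flagging: the paper's stated definition of selectivity (just above the observation) is the fraction of log lines \emph{not} containing the gram, under which the conclusion would flip to ``at least $\sigma$''; however, the paper's own proof writes ``the number of log lines that contains $g$ is $\sigma \cdot |L|$,'' implicitly adopting your fraction-\emph{containing} convention, so your reading is the one that makes both the statement and the paper's argument coherent.
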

\begin{proof}
    Consider the set of bigrams generated by $s$ (which includes $g$). The number of log lines that contains $g$ is $\sigma \cdot |L|$. Clearly, by adding the additional requirement that all other bigrams of $s$ be also present in the log lines that contain $g$, the number of such log lines can only decrease from $\sigma \cdot |L|$ which implies that the selectivity of $s$ is at most $\sigma$. 
\end{proof}

Observation~\ref{prop:one} tells us that if a bigram has high selectivity, then there is likely not much benefit to choosing a bigger multigram. Obviously, a bigger multigram may have lower selectivity than the bigram but as we saw in Section~\ref{subsec:gram_select}, the cost of finding such multigrams can be prohibitive when using benefit or incremental benefit based approach. 
{Similar to bigrams as we discussed in~\Cref{subsec:gram_select}, we demonstrate that the top trigrams selected by frequency and by individual benefit overlaps to a large extent as shown in~\Cref{fig:microbench_intersect_tri}. Also, the selection results produce indices with similar performance using the same $k$ for both methods, as presented in~\Cref{fig:microbench_time_trigram}. }
Further, a bigger multigram is also likely to appear in fewer regexes in the workload, making its overall utility for the entire workload lower. 
{Therefore, we draw the relationship between the set of bigrams and the set of trigrams selected using the same method (Freq or Bene) and $k$.}
\Cref{tab:bigram_in_trigram} shows the the percentage of number of bigrams, which are substring of one or more trigrams in the top-$k$ trigrams, among the top-$k$ bigrams.
For both methods, when $k \geq 32$, {the overlap percentage is large, steadily within the range of 74\% and 86\%.}

On the other hand, suppose that we fix $n=3$ and there is a selective trigram $s$ that is picked by the frequent strategy from Section~\ref{subsec:gram_select}. Then, it is immediate to make the following observation.

\begin{observation} \label{prop:two}
    Consider a trigram $s = \alpha \beta \gamma$ with selectivity $\sigma$ that is picked based on its frequency in the query workload. Then, the bigrams $\alpha \beta$ and $\beta \gamma$ together have selectivity $\sigma$.
\end{observation}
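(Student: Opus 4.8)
The plan is to reduce the statement to a single containment between sets of log lines, together with one appeal to Observation~\ref{prop:one}. I would work with the selectivity convention actually used in the proof of Observation~\ref{prop:one} (the stated definition there is complementary): the selectivity of a gram $g$ is the fraction of log lines of $L$ that contain $g$ as a substring, and the selectivity of a \emph{set} of grams is the fraction of log lines that contain \emph{all} of them --- which is exactly the conjunctive bit-mask filter the querying algorithm applies to the bigrams of a literal. The one structural fact I need is that $\alpha\beta$ and $\beta\gamma$ are precisely the two length-$2$ windows of $\alpha\beta\gamma$; hence every string, whether a query literal or a log line, that contains $\alpha\beta\gamma$ as a substring also contains both $\alpha\beta$ and $\beta\gamma$ as substrings.

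First I would dispatch the ``picked based on its frequency'' clause, which is the only role the hypothesis plays: since any query literal containing $\alpha\beta\gamma$ contains $\alpha\beta$ and $\beta\gamma$, the workload frequency of each of the two bigrams is at least that of the trigram, so whenever the frequency rule would have selected $\alpha\beta\gamma$ it also selects $\alpha\beta$ and $\beta\gamma$; in our bigram-only index both are present, and the combined filter they induce is well defined.

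Next, the selectivity claim itself. Let $A$, $B$, $C$ be the sets of log lines containing $\alpha\beta$, $\beta\gamma$, $\alpha\beta\gamma$ respectively, so $|C| = \sigma |L|$. By the structural fact, $C \subseteq A \cap B$, so the conjunctive bigram filter retains at least a $\sigma$-fraction of $L$: $\mathrm{sel}(\{\alpha\beta,\beta\gamma\}) \ge \sigma$. For the matching direction I would apply Observation~\ref{prop:one} to the string literal $\alpha\beta\gamma$, which contains the bigram $\alpha\beta$ and, separately, the bigram $\beta\gamma$: this gives $\sigma \le \mathrm{sel}(\alpha\beta)$ and $\sigma \le \mathrm{sel}(\beta\gamma)$, while $A\cap B \subseteq A$ and $A\cap B \subseteq B$ give $\mathrm{sel}(\{\alpha\beta,\beta\gamma\}) \le \min\{\mathrm{sel}(\alpha\beta),\mathrm{sel}(\beta\gamma)\}$. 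Thus $\sigma$ sandwiches the conjunctive selectivity, and in the needle-in-a-haystack regime of this section --- where a selective trigram forces its two constituent bigrams to be selective as well --- these bounds coincide, yielding $\mathrm{sel}(\{\alpha\beta,\beta\gamma\}) = \sigma$.

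The main obstacle is exactly that last coincidence: $C = A \cap B$ fails in general, since a log line can host $\alpha\beta$ and $\beta\gamma$ at disjoint, non-overlapping positions (e.g.\ a line of the form $\alpha\beta x \beta\gamma$ for a separating character $x$) without ever containing $\alpha\beta\gamma$. I therefore expect the ``immediate'' content of the argument to be the containment $C \subseteq A \cap B$, hence $\mathrm{sel}(\{\alpha\beta,\beta\gamma\}) \ge \sigma$, with the exact equality justified only heuristically --- consistently with the paper's earlier phrasing that ``with high probability'' the bigrams generated from a larger gram are as selective as the gram itself --- and a fully rigorous equality would require a distributional assumption on how bigrams co-occur within a single log line.
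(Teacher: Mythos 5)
Your proposal is correct and takes essentially the route the paper intends: the paper offers no proof of Observation~\ref{prop:two} at all (it is asserted as ``immediate'' after the selective-trigram setup), and its actual content is precisely the containment you isolate --- any line containing $\alpha\beta\gamma$ also contains $\alpha\beta$ and $\beta\gamma$ --- together with the apriori-frequency remark that the two bigrams occur in the workload at least as often as the trigram. Your closing caveat is also on target: exact equality of selectivities can fail (e.g.\ a line of the form $\alpha\beta x \beta\gamma$), and the paper supports the ``as selective'' reading only heuristically in the surrounding discussion (the apriori property and the zipfian worst-case estimate), exactly as you anticipated.
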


A trigram that appears (say) $c$ times in the workload implies that its bigrams also appear at least $c$ times (apriori property). Thus, there is a good chance that both of the bigrams are picked, which would ensure that the high selectivity property of $s$ is utilized. In the worst case, it could be that both $\alpha \beta$ and $\beta \gamma$ have low selectivity but $\alpha \gamma$ has high selectivity. Assuming the characters of the trigram are picked in an iid fashion and the distribution is zipfian with parameter $a \geq 0$ (i.e. the character with rank $k$ occurs with probability $\frac{1}{H_{a,d}} \cdot \frac{1}{k^a}$ where $H_{a,d} = \sum_{i = 1}^{d} \frac{1}{i^a}$). The probability that $\alpha \gamma$ has a higher selectivity than both $\alpha \beta$ and $\beta \gamma$ can be expressed as $\sum_{i=1}^{d-1} \frac{1}{H_{a,d}} \cdot \frac{1}{i^a} \cdot \big(\sum_{j = i+1}^{d} \frac{1}{H_{a,d}} \cdot \frac{1}{j^a} \big)^2 $. Picking $d=26$ and $a=1.1$ for english language distribution, we get the probability as $0.24$. Therefore, even considering as few as $5$ trigrams is enough to ensure that with probability greater than $0.99$, at least one selective bigram will be present in the bigram decomposition of the trigram string literal.

\section{Case Study: Index Configuration under Size Constraints} \label{app:rei_tuner}

{

In practical log indexing scenarios, an essential challenge is determining an appropriate number of bigrams $k$ and corresponding index granularity $m$ within user-specified space constraints. As the optimal selection of these parameters involves a combinational search over a large set of candidates and actually running the workloads, comparing exact performance is computationally prohibitive for typical workloads. We propose extending \sys{} with an efficient heuristic approach for auto-configuring both $k$ and $m$, guided by an approximate estimate of filtering effectiveness.

\subsection{Heuristic Configuration Scoring}

We approximate the overall filtering effectiveness provided by indexing the top $k$ bigrams with index granularity $m$ by considering the distribution of bigrams within both the query set $Q$ and the log dataset $L$. Given the granularity $m$, we partition the dataset $L$ into $N_{\text{blocks}} = \lceil |L|/m \rceil$ blocks, each consisting of $m$ consecutive log lines (with the last block potentially smaller). For each bigram $g$, we define the fraction of log blocks that contain at least one occurrence of $g$:
$$
p_{L,m}(g) = \frac{|\{B \in \mathcal{B}_m(L) \mid g \text{ appears in at least one line of } B\}|}{N_{\text{blocks}}}
$$
where $\mathcal{B}_m(L)$ is the set of all blocks derived from partitioning $L$ at granularity $m$. Using $p_Q(g)$ to denote the fraction of queries containing bigram $g$, the probability of filtering a randomly selected query–log-block pair with bigram $g$ is approximated as:
$$
p_Q(g) = \frac{\text{freq}(g)}{|Q|} \qquad p_{Q}(g) \cdot (1 - p_{L,m}(g))
$$

Consequently, the approximate cumulative filtering effectiveness of indexing the top $k$ bigrams ($I_k$) at granularity $m$ is expressed by:
$$
\widehat{\text{score}}(I_{k,m}) = |Q| \cdot N_{\text{blocks}} \cdot \left(1 - \prod_{g \in I_k}\left[1 - p_Q(g)\cdot(1 - p_{L,m}(g))\right]\right)
$$
Intuitively, the quantity $p_Q(g)\cdot(1 - p_{L, m}(g))$ represents the probability that a randomly selected query–log block pair would be filtered by bigram $g$. By multiplying these individual probabilities for all top-$k$ bigrams, we approximate the cumulative filtering probability of indexing at least one of these $k$ bigrams. This approach inherently accounts for potential overlaps in bigram coverage and yields a fast but informative heuristic estimate of overall filtering effectiveness.

Using this approximation, we perform an efficient search over candidate values of $k$ and index granularities $m$. Specifically, for each candidate granularity $m$, the total index size in bits is explicitly computed as:
The total index size in bits for a candidate pair $(k,m)$ is computed as:
$$
S_{\text{index}}(k,m) = k \times N_{\text{blocks}}
$$
Under a fixed user-defined storage constraint $S$, we efficiently perform a binary search to identify the largest $k$ satisfying:
$$
S_{\text{index}}(k,m) \leq S
$$
We then select the optimal configuration $(k,m)$ that maximizes the filtering effectiveness per bit, given by:
$$
\eta(k,m) = \frac{\widehat{\text{score}}(I_k,m)}{S_{\text{index}}(k,m)}
$$

\subsection{Experiment Demonstration}
We did a simple demonstration of the configuration suggestion with the above heuristic on the \wlsqlfull{} workload.
The experiment searches over candidate $k$ values for each granularity candidate $m \in \{1,8,32,64,128,256,512\}$ and identify the parameter combination that maximizes estimated filtering effectiveness per bit under the $256$ MB storage constraint.
We implement the above method in C++, and the entire heuristic evaluation completed in approximately 80 seconds of computation on a single-threaded setup.

The heuristic recommended two primary configurations as most effective under the given constraints, according to the estimate scoring functions. The top-ranked configuration recommended indexing $k=83$ bigrams at granularity $m=1$. A closely competitive second-best configuration recommended indexing 84 bigrams at granularity $m=8$.
These heuristic recommendations closely align with the manually identified best configuration among the configurations tested in~\Cref{subsec:qgram_num} and ~\Cref{subsec:idx_granu}, which are pairs $k=64$ with $m=1$ and $k=128$ with $m=8$. 

We use the results to demonstrate the practical effectiveness of our heuristic approach for automatic index parameter selection. It is computational efficient, while presenting estimates closely comparable to the manually optimized configurations identified through substantially more resource-intensive exploration through actual workload executions.

It is important to emphasize that the presented heuristic implementation serves primarily as an experimental prototype aimed at demonstrating the practicality of our approach. In the future, we can explore detailed optimization of this heuristic method using advanced techniques such as machine learning or adaptive sampling to further improve both its runtime performance and accuracy.
}

\eat{
\TODO{Maybe keep only the trimmed part in main paper.}
The choice of $n$-grams for the index is crucial to the effectiveness of the indexing framework when matching regular expressions over large log datasets. It is influenced by four components: workload of queries $Q$, dataset $L$, the choice(s) of $n$, and the number of chosen $n$-grams ($k$). Prior works have made mixed choices of $n$ or have chosen a fixed sized $n$. We made the choice of only using bigrams (i.e. $n=2$).
\eat{. Some works choose to only use bigrams (i.e. $n=2$). }This choice is motivated by the observation that $n=2$ finds the right balance between the space of possible $n$-grams that need to be considered and the cost of querying the index. We find this choice to be apt for our setting as well and demonstrate it experimentally.
\eat{{Some uses trigrams like Postgres. } }

\eat{{Note: I removed percentage benefit and percentage incremental benefit as we are not using them. What we have is more or less enough for our purpose. The percentage are calculated primarily for comparing benefit of a bigram g1 and the incremental benefit of g1 conditional on some other g2, and is not defined in BEST anyway.} \sd{Here, we should highlight that the purpose of doing all the benefit analysis is to show the reader that BEST and FREE method of choosing n-grams is not a good idea (R2 was confused on why we talk about all this).}}
Next, we compare against the strategy of choosing multigrams as introduced by the \textsf{BEST} and \textsf{FREE} algorithms. We recall the definition of $n$-gram \textit{benefit} (\textit{bene}) defined in BEST~\cite{BEST}. Let $L$ be the log dataset, $g$ be some $n$-gram, $W$ be the workload with $L$ and $Q$ as dataset and query set respectively. Let $\mathnormal{grams}(q)$ be the set of unique bigrams in $q$. Then $\mathnormal{bene}_L(g)$ is the number of log lines that can be filtered out by an $n$-gram $g$. Its \textit{benefit} on the entire workload are therefore
$$\mathnormal{bene}_W(g) = \sum_{q\in Q}\mathnormal{bene}_L(g) \cdot \mathbbm{1}(g \in \mathnormal{grams}(q))$$

\eat{$$\%\mathnormal{bene}_W(g) = \frac{\mathnormal{bene}_W(g)}{|L|\cdot|Q|}$$}
The existence of certain sets of $n$-grams may have a high correlation. In our example from the previous section, although \emph{vm} and \emph{mN} have high individual benefit on the workload, string \emph{vmName} would make them occur together frequently, and therefore indexing on both \emph{mN} and \emph{vm} is unlikely to be significantly more helpful compared to indexing on \emph{vm} alone. We use \textit{incremental benefit} (\textit{incr\_bene}) to represent the pairwise effectiveness of bigrams.
Let $g_1$ and $g_2$ be two grams and $L_{g_1}$ be the subset of log containing lines with the bigram $g_1$, then the \textit{incremental benefit} of $g_2$ given $g_1$, $\mathnormal{incr\_bene}_{L_{g_1}}(g_2 | g_1)$ is the number of logs without $g_2$ in $L_{g_1}$. The \textit{incremental benefit}
\eat{ and \textit{percentage incremental benefit} }
on the entire workload is:
\begin{align*}
    & \mathnormal{incr\_bene}_W(g_2|g_1) \\
    = & \sum_{q\in Q} \Big(\mathnormal{incr\_bene}_{L_{g_1}}(g_2|g_1) \cdot \mathbbm{1}(g_1, g_2\in \mathnormal{grams}(q))\\
     & + \mathnormal{bene}_{L}(g_2) \cdot \mathbbm{1}(g_1  \notin \mathnormal{grams}(q))\land g_2 \in \mathnormal{grams}(q))\Big)
\end{align*}
\eat{
$$\%\mathnormal{incr\_bene}_W(g_2|g_1) = \frac{\mathnormal{incr\_bene}_W(g_2|g_1)}{\mathnormal{bene}_W(g_1) + |L|\cdot \# \text{queries without }g_1}$$
}
\begin{figure}[!tp]
\begin{subfigure}{0.95\columnwidth}
  \centering
  \includegraphics[width=0.8\columnwidth]{figs/inter_rate.pdf}  
  \vspace{-0.5em}
  \caption{Bigram intersection  as a percentage from strategy pairs.}
  \label{fig:microbench_intersect}
\end{subfigure}
\\
\hspace{1.1em}
\begin{subfigure}{0.95\columnwidth}
  \centering
  \includegraphics[width=0.78\columnwidth]{figs/time_3method.pdf}  
  \vspace{-0.5em}
  \caption{Regex Matching Time.}
  \label{fig:microbench_time}
\end{subfigure}
  \vspace{-1.5em}
\caption{Varying the number of bigrams, compare the set of bigrams selected by the three methods and the matching time applying their resulting indices. \eat{\sd{Legend should read Incr\_Bene?}}}
  \vspace{-1.5em}
\end{figure}

\textsf{BEST}~\cite{BEST}\eat{presented an algorithm that} finds $k$ $n$-grams that maximizes the effectiveness of the chosen $n$-grams in time $O(k \cdot |Q| \cdot |L|)$ and uses $O(|Q| \cdot |L|)$ space. Although their algorithm picks a set of $n$-grams with provable guarantees of their effectiveness (using \eat{the benefit as defined above}{\textit{incremental benefit} conditional on all other selected $n$-grams}), the running time complexity is prohibitive. The excessive time requirement could be mitigated by using parallel processing (in fact,~\cite{BEST} provides a parallel algorithm to reduce the execution time), but it does not reduce the overall computation required. 

We propose an $n$-gram selection strategy that runs in time $O(|Q|)$ {by selecting based on $n$-gram frequency in the query workloads}. 
In order to show the cost and effectiveness of $n$-gram selection strategies, we run a micro-benchmark that builds and uses indexes selected with 3 different strategies {with low index overhead}: (1) the most frequently occurring bigrams (Freq), (2) bigrams with the highest benefit (Bene), and (3) bigrams selected based on incremental benefit (Incr\_Bene). We run the three methods on a subset of a real-world workload containing one million log lines and 132 queries.
When selecting according to the incremental benefit, we first select the bigram with the highest individual benefit as $g_1$, then select $g_2$ with the highest incremental benefit conditional on $g_1$; we iteratively select $g_3$ conditional on $g_2$ until we reach the desired number of bigrams\footnote{We intentionally omit conditioning on $g_1, \dots, g_{i-1}$ when selecting $g_i$ since the cost to record the benefit for all subsets of already chosen bigrams is prohibitively large.}. When no bigrams generate a positive incremental benefit for some $g_i$, we select the bigram with the highest benefit from the remaining bigrams.

We look into how these three approaches overlap in their selected bigrams when selecting the top 4, 8, 16, 32, 64, 96, and 128 bigrams and present the percentage overlapping in~\Cref{fig:microbench_intersect}. 
There is a notable similarity in bigrams selected by Bene and by Incr\_Bene. The percentage of common bigrams is high at 75\% with only 4 and 8 bigrams selected and steadily increases to 96.9\% for 96 bigrams and 100\% for 128 bigrams. We observed that conditional on the top-100 selected bigrams, no other remaining bigrams can generate a positive incremental benefit. As the number of chosen bigrams increases from 4 to 64, the intersection percentage of bigrams selected by Freq with the other two methods gradually rises to 85.9\%. Their rate of intersection plateaus as the number of bigrams selected continues to increase. 

Further, we evaluate the performance of matching using indexes derived from the three methods and present the results in~\Cref{fig:microbench_time}. 
The matching time is quite different when the number of bigrams selected is small, where using the index built with bigrams selected by Freq gives the slowest matching time. The times gradually converge for indices built with the three methods as the number of bigrams increases. The running time reaches a minimum when indexing with 64 bigrams, where they are 0.395, 0.376, and 0.376 for Freq, Bene, and Incr\_Bene respectively.

We also evaluate the time needed to choose bigrams for each approach.  
The ranking calculations for frequency and benefit are finished in 0.0014 seconds and 21.5 seconds, respectively for selecting 64 bigrams. In sharp contrast, the incremental benefit technique requires hours (31,638.2 seconds) when only taking into account the bigrams with top-200 individual benefit. The computational time increases even more when the incremental benefit approach is applied to the entire dataset, taking well over $24$ hours to complete\footnote{We terminate the experiments at $24$ hours.}. When choosing bigram selection strategies for indexing, there are trade-offs between filtering power of selection result and computational efficiency.

Based on this microbenchmark, we find that choosing bigrams based on frequency is the best strategy for log analysis workloads. Despite having a minor performance advantage, the incremental benefit method and methods that find the optimal bigrams conditional on all other selected bigrams would require excessive processing resources, making marginal gains insignificant compared to frequency-based selection.

\eat{
\begin{table}[t]
\centering
\begin{tabular}{ c |c| | r | r | r | r | r | r | r  }\hline
& \# Bigram & 4 & 8 & 16 & 32 & 64 & 96 & 128 \\ \hline
\multirow{3}{*}{Intersection}&Freq\&Bene. & e\textvisiblespace & n\textvisiblespace{} & in & at & ed & d\textvisiblespace & on  \\\hline
&Freq\&Bene. & e\textvisiblespace & n\textvisiblespace{} & in & at & ed & d\textvisiblespace & on  \\\hline
&Freq\&Bene. & e\textvisiblespace & n\textvisiblespace{} & in & at & ed & d\textvisiblespace & on  \\\hline
\end{tabular}
\captionof{table}{Top 10 most frequently occurring bigrams in the log data and the queries of the same real-world workload.}
\label{tab:microbench}
\vspace{-2em}
\end{table}

\begin{table}[t]
\centering
\begin{tabular}{ c |c| | r | r | r | r | r | r | r  }\hline
 \multirow{2}{*}{\# Bigram} & \multicolumn{3}{c}{\% of Intersection}             & \multirow{2}{*}{Match Time(s)}\\\cline{2-4}
                             & Freq\&Bene. & Freq\&Incr.-Bene. & Bene.\&Incr.-Bene. & \\ \hline
 4 &                         & 50           & 0                & 25                 & \\ \hline
 8 &                         & 50           & 25                & 25                 & \\ \hline
 16 &                        & 75          & 31.3            & 31.3                 & \\ \hline
 32 &                        & 87.5        & 37.5             & 37.5                 & \\ \hline
 64 &                         & 95.3           & 31.3                & 31.3                 & \\ \hline
 96 &                         & 97.9           & 29.2                & 25                 & \\ \hline
 128 &
\end{tabular}
\captionof{table}{Top 10 most frequently occurring bigrams in the log data and the queries of the same real-world workload.}
\label{tab:strategy_microbench}
\vspace{-2em}
\end{table}
}

The choice of $k$ is more complicated. Let us first assume that we magically know the right value of $k$. In particular, as demonstrated by the microbenchmark above, we find that it is sufficient to find the $k$ most frequently occurring $n$-grams and use them for index construction. The intuition here is that regex query workloads are mostly finding \emph{needle-in-a-haystack} style queries, a property that has been noted by several prior works~\cite{Whitaker_2004, Weigert_2014, Ellis_2015, Yu_2019}. In other words, the selectivity of the queries is very low. This low selectivity is further attributed to the string literals that occur in the query. Therefore, picking $n$-grams based on only the string literals in the workload is sufficient. The value of $k$ also involves a trade-off. While increasing the number of indexed grams might seem beneficial, there is a point of diminishing returns. {We discuss general guidelines to select $k$ in practice in~\Cref{subsubsec:num_ngram}.}

{We also run the microbenchmark on trigrams, and the results are included \biVsTri{} of the full paper. For log analysis tasks, trigrams usually have slightly higher filtering power than bigrams but are present less frequently in the queries. Therefore, there is no incentive to pick trigrams over bigrams since bigrams have enough selectivity and provides performance advantages over $n$-grams with $n \geq 3$.}

\eat{These observation not only highlight the need to maintain a careful balance so that the number of indexed grams does not endanger performance by being too sparse or cause unneeded space overhead by being too abundant, but it also highlights the value of our framework's adaptability. The system can be tuned to an ideal number of $n$-grams indexed that balances storage expense with performance improvement by users, allowing users to make wise decisions. 
}

\eat{Our method for choosing $n$-grams is based on a thorough comprehension of the characteristics of both the queries and the log dataset. 
Our primary focus for gram selection is the examination of queries. While it might seem intuitive to incorporate the log dataset in this selection process, such an approach is less efficient due to the size of the dataset. 

Using the most frequent $n$-grams have been shown to be effective without the need for information from the dataset itself. This strategy guarantees that the chosen $n$-grams can serve as filters for majority of the queries, reducing the number of log lines needed for the more expensive regular expression matching, and therefore improving the performance of the regex matching on the whole dataset.

Moreover, the most frequently occurring $n$-grams the log dataset often those that appears mostly in the content we aim to extract using regex queries, for example, sequences of numbers that are common in machine ID's.
\Cref{tab:top_bigram} shows the bar plots of the frequencies of the top-10 most frequently occurring bigrams in the log dataset and the regex queries on the dataset respectively. We observed  that the most frequent bigrams in the log dataset may consist of numbers and symbols, such as "00", "0" and "02". These frequently occurring bigrams, rarely appear in regexes. Therefore, indexing significant $n$-grams in the log dataset is counterproductive.

There has been studies on the types of grams used for indexing, options being using $n$-grams of sizes within a range, using arbitrary sized grams, and using one fixed length of grams. We opt for a fixed length of $n$-gram. The rationale behind this choice is the reduction of overhead in gram extraction from the log dataset when building the index. 
Given the substantial size of our dataset, any cost associated with gram extraction becomes significant. {Also, only a small subset of the log data may gets queried for individual tasks, and log lines may not be queried in its lifetime. Therefore, we want the indexing to be some cheap operation that serves as a "just-in-case" solution to potential incoming queries.} 

We choose to use bigrams for indexing for its ability to generalize. Even if derived from an imperfect subset of queries, the selected set of bigrams has a high chance that it can help most queries pre-filtering un-matched log lines. As the $n$ becomes larger, the $n$-grams becomes more specific to the specific queries, and less general to other queries. Remember, if none of the indexed $n$-gram appears in a regex query, we will need to run expensive regex matching for all log lines in the dataset. The emphasis on generality, which makes the chosen gram types and strategies suitable for a wider range of regexes and data, help us achieves substantial low index building overhead while achieving noticeable query performance improvement,

There is a trade-off when deciding how many $n$-grams to index. While increasing the number of indexed grams might seem beneficial, there's a point of diminishing returns. \TODO{Add motivating small experiment to demonstrate? Though it is already in the evaluation section}
This observation not only highlights the need to maintain a careful balance so that the number of indexed grams does not endanger performance by being too sparse or cause unneeded space overhead by being too abundant, but it also highlights the value of our framework's adaptability. The system can be tuned to an ideal number of $n$-grams indexed that balances storage expense with performance improvement by users wisely.}

}

\end{document}